\newcommand{\Eta}{\Tilde{\eta}}
\newcommand{\R}{\mathbb{R}}
\newcommand{\BM}{\begin{bmatrix}}
\newcommand{\EM}{\end{bmatrix}}
\newcommand{\be}{\begin{equation}\begin{aligned}}
\newcommand{\ee}{\end{aligned}\end{equation}}
\newtheorem{definition}{Definition}
\newtheorem{theorem}{Theorem}
\newtheorem{corollary}{Corollary}
\newtheorem{remark}{Remark}
\newtheorem{assumption}{Assumption}
\newtheorem{example}{Example}
\begin{document}

\begin{frontmatter}

\title{Heterogeneous mixtures of dictionary functions to approximate subspace invariance in Koopman operators} 

\thanks[footnoteinfo]{ Corresponding Author at: The Biological Computing and Learning Laboratory at UC Santa Barbara, Santa Barbara, California 93106}

\author[ey]{Charles A. Johnson}\ead{cajohnson@ucsb.edu}$^{,*}$,    
\author[sh]{Shara Balakrishnan}\ead{sbalakrishnan@ucsb.edu}, 
\author[ey]{Enoch Yeung}\ead{eyeung@ucsb.edu}               

\address[ey]{Department of Mechanical Engineering, University of California, Santa Barbara, 93106, United States}     

\address[sh]{Department of Electrical and Computer Engineering, University of California, Santa Barbara, 93106, United States}

\begin{keyword}          
Koopman operator; deep dynamic mode decomposition; identification methods; subspace approximation; invariant subspaces; nonlinear system identification; semigroup and operator theory;  neural networks; modeling and identification.       
\end{keyword}         

\begin{abstract}                         
Koopman operators model nonlinear dynamics as a linear dynamic system acting on a nonlinear function as the state. 
This nonstandard state is often called a Koopman observable and is usually approximated numerically by a superposition of functions drawn from a \textit{dictionary}. A widely used algorithm, is \textit{Extended Dynamic Mode Decomposition}, where the dictionary functions are drawn from a fixed, homogeneous class of functions.  Recently, deep learning combined with EDMD has been used to learn novel dictionary functions in an algorithm called deep dynamic mode decomposition (deepDMD). The learned representation both (1) accurately models and (2) scales well with the dimension of the original nonlinear system. In this paper we analyze the learned dictionaries from deepDMD and explore the theoretical basis for their strong performance. We discover a novel class of dictionary functions to approximate Koopman observables.  Error analysis of these dictionary functions show they satisfy a property of subspace approximation, which we define as uniform finite approximate closure. We discover that structured mixing of heterogeneous dictionary functions drawn from different classes of nonlinear functions achieve the same accuracy and dimensional scaling as deepDMD. This mixed dictionary does so with an order of magnitude reduction in parameters, while maintaining geometric interpretability. Our results provide a hypothesis to explain the success of deep neural networks in learning numerical approximations to Koopman operators. 
\end{abstract}

\end{frontmatter}

\section{Introduction}



Koopman operator theory considers an alternate representation of a dynamic system where the state evolution of a nonlinear system is linear.  In this representation, the concepts of vibrational, growth, and decay modes in linear systems can be directly extended to nonlinear systems \cite{mezic2005spectral}. These modes address problems in the field of fluid mechanics \cite{rowley2009spectral, schmid2010dynamic, mezic2013analysis} and disease modeling \cite{proctor2018generalizing}, as well as programming the steady state of biological systems \cite{hasnaindata} or extracting new biosensors \cite{hasnain2022learning}.  The spectral properties of the linear Koopman operator in this function space connects to model reduction, validation, identification and control \cite{mauroy2020Koopman}. Since Mezić's first paper on the spectral properties of the Koopman operator, computing Koopman modes has become a major research focus \cite{budivsic2012applied, williams2014kernel}. The central algorithm for computing these modes is dynamic mode decomposition (DMD).

DMD-based methods, such as extended dynamic mode decomposition (EDMD) \cite{williams2015data} give an attractive set of tools to model nonlinear dynamics from data. In EDMD, one chooses a nonlinear function space implicitly defined by the span of a predefined, homogeneous function dictionary and then, individual modes are learned to compute a low-rank approximation to the linear operator. Because a human typically specifies the dictionary functions in EDMD, the resulting models tend to be high dimensional. One approach to learning lower dimensional models is through the SINDy algorithm. This algorithm uses sparse regression to project to a lower dimensional subspace of the nonlinear function space initially chosen \cite{brunton2016discovering}. 

A second approach we have explored is to use deep learning to learn a function dictionary or a set of observables during training \cite{yeung2019learning}.  deepDMD uses a form of stochastic gradient descent (SGD) to train a deep artificial neural network to select observables.  Because the neural network is a universal function approximator \cite{hornik1991approximation}, it can learn any set of observables and parameterize the approximate Koopman operator. The deepDMD algorithm has been generally successful at learning both small and large scale systems with many types of strong nonlinearities \cite{yeung2019learning}.

Other approaches leverage deep learning to build efficiently parameterized Koopman models with high fidelity.  Methods using linear recurrent autoencoders build effective, low-dimensional Koopman models that are not measurement-inclusive \cite{wehmeyer2018time, lusch2018deep, otto2019linearly}. To do this, they pass observables from the learned function space, via the decoder, back to the original model's state-space. Other work has, under the assumption of ergodicity, delegated the process of choosing a lifting map for EDMD to deep learning \cite{takeishi2017learning}. 


Neural network models are typically black box. Simple parametric representations of the functions that neural networks learn are elusive. We use dictionaries of functions inspired from deep learning to build Koopman models which both capture the system dynamics, and approximate the dynamics of the nonlinear dictionary functions. The success of these models provides insight into why deep learning models have the capacity to approximate Koopman operators well.

We also follow the pattern we see from deep learning by mixing classes of dictionary functions to build low-dimensional models. The mixed function bases we study give further insight into the strategy adopted by deepDMD to solve the data-driven Koopman learning problem. Combining our new function dictionaries with an effective learning algorithm yields a training loss comparable to deepDMD, but with an order of magnitude reduction in model complexity.  

\section{The Koopman Generator Learning Problem}\label{sec:KLP}
In this section we introduce the problem of choosing a function space and approximate Koopman generator to model a dynamic system.  This problem is not convex, but, when approximated well, it gives an accurate, data-driven, linear model of a nonlinear system. 

Consider a nonlinear, time-invariant, autonomous system with dynamics
\begin{equation}\label{eq:nonlinear_system}
\dot{x} = f(x)
\end{equation}
where $x \in M \subset \mathbb{R}^n$, $f:\mathbb{R}^n \rightarrow \mathbb{R}^n$ is analytic.  The manifold, $M$, is the state-space of the dynamical system.  
We introduce the concepts of a Koopman generator and its associated multi-variate Koopman semigroup, following the exposition of \cite{budivsic2012applied}. 


For continuous nonlinear systems, the Koopman semigroup is a semigroup, a set with an associative binary operation, $\mathcal{K}_{t\in \mathbb{R}}$ of linear but infinite dimensional operators, $\mathcal{K}_t$, that acts on a space of functions, $\Psi$, with elements $y: M \rightarrow \mathbb{R}^m$.
We assume that each $y\in\Psi$ is differentiable with a bounded derivative. Our function, $y$, is an observable because it is a function of the state, $x$.  We say $\mathcal{K}_t:\Psi \rightarrow \Psi$ is an operator for each $t\geq 0$.  
The Koopman operator applies the transformation, 
\begin{equation}\label{eq:koopman_def} \mathcal{K}_t \circ y(x) = y \circ \Phi_t(x),\end{equation}
where $\Phi_t(x)$ is the flow map of the dynamical system (\ref{eq:nonlinear_system}) evolved forward up to time $t$, given the initial state $x$. Instead of examining the evolution of the state, the Koopman semigroup allows us to study the forward evolution of functions of state, $y(x)$ \cite{williams2015data}. 

The generator, $\mathcal{K_G}$, for the Koopman semigroup is defined as 
\begin{equation}
\mathcal{K_G} \circ y  \triangleq \lim_{t\rightarrow 0} \frac{ \mathcal{K}_t \circ y - y }{t}.
\end{equation}

When $y$ and $t$ are fixed, we see from Eq. (\ref{eq:koopman_def}) that $\mathcal{K}_t$ is a function of the state, $x$. Similarly, $\mathcal{K_G}$ may be understood as a function of $x$.  The Koopman generator is a state-dependant derivative operator, see Section 7.6 of \cite{lasota1998chaos}. It satisfies
\be\label{eq:dyn_sys_in_data}
\frac{d}{dt}y(x)=\mathcal{K_G}(x)\circ y(x).
\ee In rare cases the Koopman generator will not have a dependence on $x$, see \cite{brunton2016koopman}.

\subsection{Learning Koopman operators from data}
In discrete-time, data-driven Koopman operator learning, we have $r$  pairs of measurements of observables, \[(y(x_i), y(f(x_i))),\mbox{ for }i=1,2,...,r.\] In continuous-time (CT), data-driven Koopman generator learning, our pairs are measurements of observables and their derivatives \[(y(x_i), d(y(x_i))/dt),\mbox{ for }i=1,2,...,r.\] 

Because we are working with numerical computation, these measurements and their derivatives are finite dimensional, call the dimension $m<\infty$. We assume that our measurements are higher dimensional than the underlying dynamic system, $m\geq n$, and that $y(x)$ is injective. We assume all measurements are noise-free.

The Koopman generator (for CT) $\mathcal{K_G}(x)$ is unknown, however, we assume it exists and satisfies \[ \frac{d(y(x_i))}{dt} = \mathcal{K_G}(x_i)\circ y(x_i)\mbox{ for }i=1,2,...,r.\]

While we do not know the values of $x$ from our data we can write down our derivative function as an implicit function of $x$, this function is, $F:\R^m\rightarrow \R^m$, and it satisfies \[F(y) = \mathcal{K_G}(h^{-1}(y))\circ y,\] where $y=h(x)$. The function $h$ is invertible as we assume that $y$ is injective.  

The implicit function, $F$, simply is Eq. (\ref{eq:dyn_sys_in_data}) rewritten in terms of $y(x)$ instead of $x$. This is an important distinction as the true system state $x$ is not necessarily known. We are working from measurements (observables) of the system in Eq. (\ref{eq:nonlinear_system}). We label these measurements $y$.   We then, choose our own finite set of $N$ {\em dictionary functions}, $\psi(y):\R^m\rightarrow \R^N$, and a constant matrix, $K\in\R^{N\times N}$ to model $F$. 

Note that we choose $\psi$ without knowing the equation for $y(x)$. The function $y(x)$ is an observable function, as it is a measured function of the state of the system given in Eq. (\ref{eq:nonlinear_system}). Likewise, as $\psi$ is a function of an observable, $\psi$ is a function of $x$ as well. Therefore, in a data-driven setting, dictionary functions of $y(x)$ are also observables of some Koopman generator. Typically, we choose $N$ such that $N>m$, and so $\psi(y)$ is a ``lifting'' of our data (see the use of ``lifting'' in \cite{korda2018convergence}).

In numeric methods for Koopman modeling, we approximate $F$ with the matrix-vector product, $K\psi(y)$.  Effectively, in a data-driven setting, this amount to projecting the action of an infinite dimensional Koopman operator as matrix multiplication on sampled data space. For example, when $N>m$, we can define a projection function $\mathtt{P}:\R^N\rightarrow \R^m$ that maps the matrix-vector product of the approximate Koopman operator $K$  multiplying the dictionary function $\psi(y)$  to the vector field $F(y)$,  such that
\be \mathtt{P}\circ K \circ \psi \circ y \approxeq  F \circ y .\ee 
This, in summary, yields the following finite approximation to the Koopman generator equation
\be\frac{dy(x)}{dt}=\mathcal{K_G}(x)y(x) \triangleq F(y(x)) \approxeq \mathtt{P}(K\psi(y)).\ee 

The matrix, $K$, is a linear operator, to ensure that it behaves like a true Koopman generator we choose it, in conjunction with $\psi$, to satisfy \be\label{eq:koop_approx_behavior}\frac{d\psi(y)}{dt}\approx K\psi(y).\ee  

We are tasked to learn a numerical, finite-dimensional approximation $K$ of the Koopman generator, $\mathcal{K_G}$ and the set of dictionary functions, $\psi$.  This matrix $K$ is the Koopman generator approximation that acts specifically on data-centered evaluations of dictionary functions $\psi(y)$ rather than the observable function $y(x).$

\subsection{Problem Statement}
Finding the Koopman generator $\mathcal{K_G}$ from data is  impossible, due to the lack of knowledge about the true Koopman generator and the parametric form of the measurements.  So, we instead solve the following optimization problem in its place.
\begin{equation}\label{eq:objective}
\min_{K, \psi \in \Psi} \sum_{i=1}^{r} \left\Vert  \frac{d\psi(y(x_i))}{dt}  - K \psi(y(x_i)) \right\Vert.
\end{equation}

In this optimization we need to select dictionary functions, as well as a real-valued matrix, $K$. This optimization problem is non-convex, since the form of $\psi(y)$ is unknown or parametrically undefined.  The model dimension, $N$, is a hyperparameter of Eq. (\ref{eq:objective}).

In EDMD the dictionary functions $\psi(y)$ are predefined, drawn from a  {\it homogeneous} class of nonlinear functions, and assumed to be known.  Under these assumptions, Eq. (\ref{eq:objective}) is an affine optimization problem with a closed-form solution. 

By contrast, in deepDMD, the dictionary functions and $K$ are learned simultaneously during iterative training.  This is, of course, a nonlinear, non-convex optimization problem, for which we employ variants of the stochastic gradient descent algorithm from Tensorflow or Pytorch, such as adaptive gradient descent (AdaGrad) or adaptive momentum (ADAM).   These numerical approaches provide no guarantee that the learned set of dictionary functions $\psi_1(y), \psi_2(y), ...,\psi_{N}(y)$ are homogeneous, or drawn from the same function class. The outcomes of deep dynamic mode decomposition often produce heterogeneous dictionaries $\psi(y)$.


\begin{figure*}[ht]
    \centering
    \includegraphics[width=\linewidth]{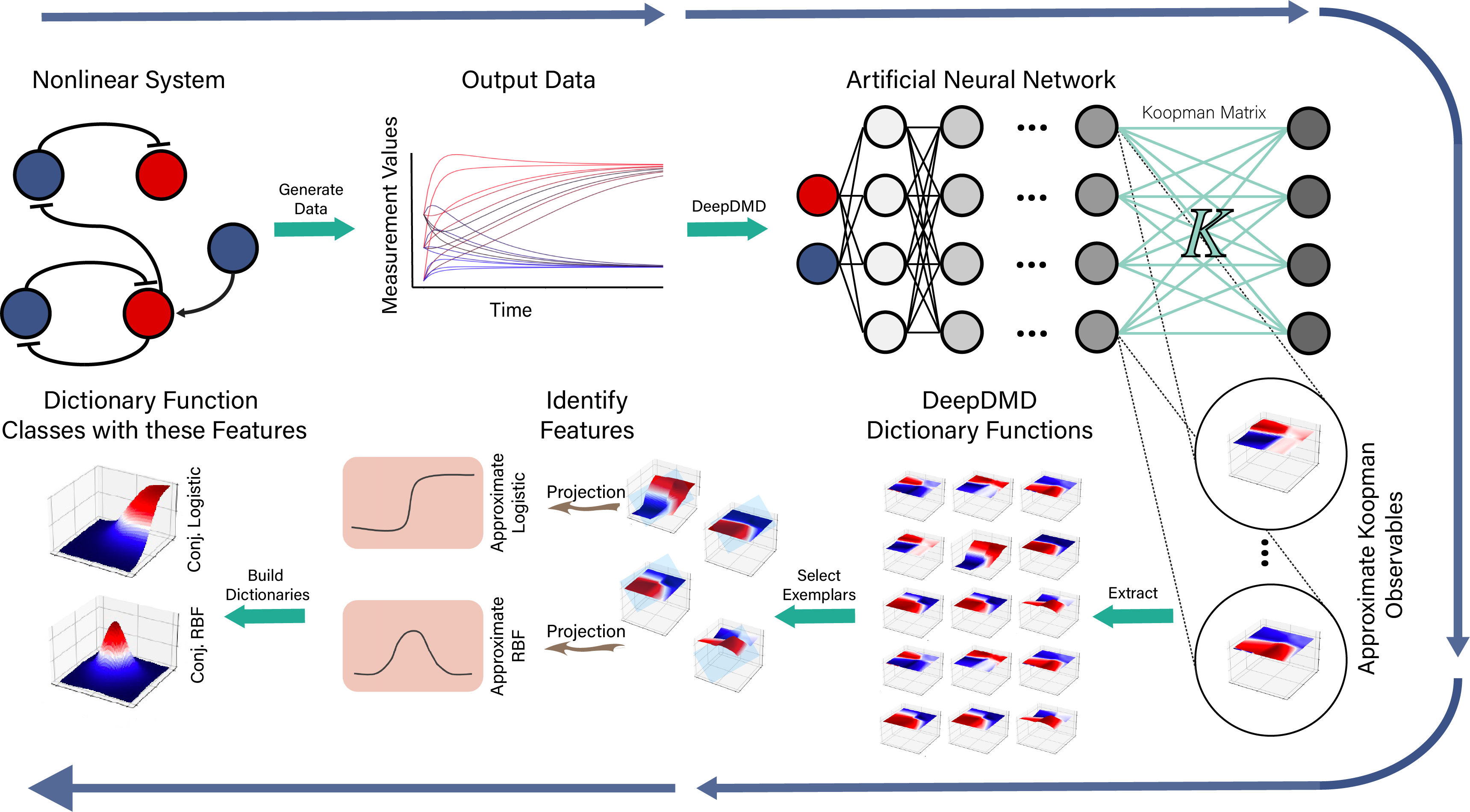}
    \caption{Process for the discovery of a novel mixed function dictionary with approximate subspace invariance. First, the deepDMD algorithm takes data from a nonlinear system to build approximate Koopman observables. Then, projections of these observables are analyzed for these functional properties. Finally, high-dimensional functions, whose projections satisfy these properties, are developed and their closure properties are verified.}
    \label{fig:DDMD_0bsTogg}
\end{figure*}

\subsection{Finite Closure}\label{sec:error}

Previous work characterizes the closure and convergence of Koopman models as additional dictionary functions are appended to the model for the DMD and EDMD algorithms \cite{arbabi2017ergodic, korda2018convergence}. We explore closure as an inherent property of a dictionary. We begin by understanding the property of subspace invariance, which corresponds to finite exact closure.

Let $S$, be the span of our dictionary functions. The set $S$ is a subspace of the set of all analytic functions, including $y$, that map $x$ to $\R$.

\begin{definition}
We say that our dictionary, $\psi(y)$, satisfies Koopman \textit{subspace invariance} when $\mathcal{K_G}\cdot \psi(y)\in S$. We call a Koopman subspace invariant dictionary a dictionary that satisfies \textit{finite exact closure}.
\end{definition}

If some element in a dictionary does not satisfy Koopman subspace invariance, then there exists some element in the dictionary that, when acted on by the Koopman operator, cannot be represented as a linear combination of dictionary functions.  In the context of data-driven Koopman learning this means that when our dictionary contains $N$ functions, no $N$ by $N$ matrix captures the precise action of the Koopman generator.  

Finite exact closure (Koopman subspace invariance) is unlikely to be achieved, as (1) the Koopman generator may need to be infinite dimensional and (2) in the case that it does not, it is difficult to engineer models with exact closure even with an explicit knowledge of Eq. (\ref{eq:nonlinear_system}).  So, we also consider an approximate notion of closure relevant to building models from data. 
\begin{definition}\label{def:uniform} We say $\psi(y):\R^m \rightarrow \mathbb{R}^{N}$ achieves finite $\epsilon$-closure or finite closure with $O(\epsilon)$ error when there exists a $K\in \mathbb{R}^{N\times N}$ and an $\epsilon > 0$ such that 
\begin{equation}
\frac{d(\psi(y))}{dt} = K \psi(y) + \epsilon(y),
\end{equation} for the vector field $F$.

Our dictionary $\psi(y)$ achieves \textit{finite approximate closure} when, for the vector field, $F$, and every $y$, the function $\epsilon(y)$ is a bounded for every $K$ such that $||K||<\infty$.

We say that $\psi(y)$ achieves {\em uniform finite approximate closure} for some set $\mathcal{R}$ when it achieves finite closure with $|\epsilon(y)| < B \in \mathbb{R}$ for all $y \in \mathcal{R}.$
\end{definition}

We are most interested by the property of uniform finite approximate closure, especially when we can bound the constant, $B$, to be arbitrarily low as this corresponds to an accurate, data-driven model.

\begin{example}\label{example}
Unfortunately, closure does not come with every dictionary of observables, even if that dictionary spans the function space of the dynamic system. For example, a canonical polynomial basis, $\{1, y, y^2,...,y^n\}$, used to approximate the one-dimensional system $f(y)=y^2$, spans the dynamic system, but no model using this basis will be closed. In fact, none will achieve uniform finite approximate closure.  We illustrate what this lack of closure means when $y>>1$. In that case, the Lie derivative of $y^n$ is  \[\frac{d(y^n)}{dt} = \frac{d(y^n)}{dy}\frac{dy}{dt} = ny^{n-1}y^2  = ny^{n+1}.\] Approximating this derivative when $y>>1$ as an $n^{th}$ degree polynomial will dramatically fail as the error will be of order $O(y^{n+1})$.  This failure will cascade back through approximations of all the other dictionary functions. Closure is a crucial property of these models! 
\end{example}

We want models with uniform finite approximate closure because, as the bounding constant goes to zero, $B \rightarrow 0$, we may use $K$ to perform stability, observability and spectral analysis. We see this is true as, in the discrete-time formulation, as $B\rightarrow 0$, $K$ approaches a projection of the action of the Koopman operator \cite{korda2018convergence}. 
When our dictionary is state-inclusive, it is trivial to project from from $\psi$ to $y$ and its trajectory may yield stability insights. To keep our models meaningful in this way, all the dictionaries in this article are state-inclusive.

\section{SILL: A Homogeneous Dictionary Model of deepDMD's Learned Dictionary}\label{sec:SILL}
In this section, we define a new class of dictionary functions identified from deepDMD's solution to Eq. (\ref{eq:objective}). We will call this class \textit{State-Inclusive Logistic Liftings} (SILLs). We call them this because
\begin{enumerate}
    \item they contain the state of the vector field $F(y)$, the dynamics of the governing equations (as we have measured them) are directly included in the dictionary, so they are state-inclusive,
    \item the dictionaries in this class also contain nonlinear functions, all of which are conjunctive logistic functions, so these dictionaries are logistic in nature, and
    \item because each has at least one nonlinear dictionary function so the dictionary size, $1+m+N$, is greater than the number of measurements, $m$. Since $1+m+N>m$, the model using this dictionary is ``lifted'' to a higher dimension than the original measurements.
\end{enumerate}   

We previously showed that Koopman models chosen from SILL dictionaries have successfully learned global nonlinear phase-space behavior of several simple, nonlinear systems \cite{johnson2018class}.  In Section \ref{sec:SILLclosure} we show for the first time that SILL dictionaries satisfy uniform finite approximate closure. 

Note that the functions that we define technically are measurement-lifting, not state-lifting. This is a consequence of how we cast our problem formulation.

The notation in the sections that follow in this paper is explained in Section \ref{sec:AppendixNotation} of the Appendix.


\subsection{The SILL Lifting Functions and deepDMD}
We define a multivariate conjunctive logistic function, $\Lambda:\R^m\rightarrow\R$, for a given center parameter vector $\mu_j \in \mathbb{R}^m$ and steepness parameter vector $\alpha_j \in \mathbb{R}^m$   as follows
\begin{equation}
\Lambda(y; \mu_j, \alpha_j) \triangleq \prod_{i=1}^{m}\lambda(y_i; \theta_{ji}),
\end{equation}
where $y \in \mathbb{R}^m$ and the scalar logistic function, $\lambda:\R\rightarrow\R$, is defined as 
\begin{equation}\label{logistic}
\lambda(y_i; \mu_{ji}, \alpha_{ji}) \triangleq \frac{1}{1+e^{-\alpha_{ji} (y_i-\mu_{ji})}}.
\end{equation}
The parameters $\mu_{ji}$ define the centers or the point of activation for $\Lambda(x;\theta_j)$ along dimension $y_i$.  The parameter $\alpha_{ji}$ is a steepness or sensitivity parameter, and determines the steepness of the logistic curve in the $i^{th}$ dimension for $\Lambda(y;\theta_j)$. Conjunctive logistic functions map orthants of $R^m$ to be nearly 1 and the rest of the space to be nearly 0.

To illustrate how this conjunctive logistic function works, consider what happens if you set the vector $y$ to be constant in all but the $l^{th}$ dimension. Then $\Lambda(y;\mu_j,\alpha_j) = (\prod_{i\neq l}c_i)\lambda(y_l;\mu_{jl}, \alpha_{jl})$. This is a constant times the logistic function in the $l^{th}$ dimension. When we project dictionary functions learned by deepDMD to a single dimension we observe that they likewise approximate a scaled logistic function.

Given $N$ multivariate logistic functions, we  define a SILL dictionary as $\psi : \R^m\rightarrow \R^{1+m+N}$ so that:
\begin{equation}
\psi(y) \triangleq \begin{bmatrix}
1&
y^T&
\bar\Lambda(y)^T
\end{bmatrix}^T
\end{equation}
where $\bar\Lambda(y) = [\Lambda(y;\theta_1), \hdots,\Lambda(y;\theta_N) ]^T$ is a vector of conjunctive logistic functions. We then have that $K\in\R^{(1+m+N)\times (1+m+N)}$.   This basis is measurement-inclusive. Ideally $\bar\Lambda(y)$, the measurements themselves and a constant spans each dimension of the vector field over the region of interest.


\section{AugSILL: A Heterogeneous Dictionary Model of DeepDMD's Learned Dictionary}\label{sec:augSILL}


Projections of deepDMD's dictionary functions tended to match the profiles of logistic functions, however there were some punctuated irregularities in these projections (See Fig. \ref{fig:DDMD_0bsTogg}).  We choose to model these irregularities with radial basis functions (RBFs).  We therefore augment our SILL dictionary with conjunctive multivariate RBFs (conjunctive RBFs).  We define an RBF with a steepness of $\alpha_{ki}$ and a center of $\mu_{ki}$ as follows: \[\rho(y_i; \mu_{ki}, \alpha_{ki}) \triangleq \frac{e^{-\alpha_{ki} (y_i-\mu_{ki})}}{(1+e^{-\alpha_{ki}(y_i-\mu_{ki})})^2}.\] 

This RBF takes on its global maximum value of $\frac{1}{4}$ when the measurement $y_i=\mu_{ki}$ the center parameter. It radially approaches zero at a rate determined by the value of the steepness parameter $\alpha_{ki}$.

Note the following relationship between our RBF and logistic functions. The RBF  $\rho(y_i;\theta_{ki}) = \lambda(y_i;\theta_{ki}) - \lambda(y_i;\theta_{ki})^2$. So, even in one dimension, exactly approximating an RBF would require an infinite linear combination of SILL functions (a piecewise linear spline on infinitesimally small intervals). Thus, conjunctive RBFs contribute  distinct nonlinear features that are outside the span of the SILL function space.  This mathematical observation is the rationale for referring to this mixture of dictionary functions as heterogeneous.

We define an $m$-dimensional conjunctive RBF to be: \[P(y;\mu_k, \alpha_k)\triangleq\prod_{i=1}^m \rho(y_i;\theta_{ki}).\]  

Conjunctive RBFs map their center to a value of $4^{-m}$ and radially around that center drop off to zero. The steepness parameters determine how quickly the drop off to zero occurs along each coordinate axis.

When you set the vector $y$ to be constant in all but the $l^{th}$ dimension, $P(y;\mu_j,\alpha_j) = (\prod_{i\neq l}c_i)\rho(y_l;\mu_{jl}, \alpha_{jl})$. This is a constant times the RBF  in the $l^{th}$ dimension. When we project dictionary functions learned by deepDMD to a single dimension, we observe that in addition to approximating scaled logistic functions, many approximate scaled RBFs and scaled sums of RBFs and logistic functions. 

We propose the augSILL dictionary as the stacked vector of a mixture of dictionary functions of $y$ \[ \psi(y) \triangleq \begin{bmatrix}
1&y^T&\bar \Lambda^T&\bar P^T
\end{bmatrix}^T, \] where the vector $\bar\Lambda \triangleq [\Lambda(y;\theta_1), ..., \Lambda(y;\theta_{N_L})]^T$  contains all conjunctive logistic functions, and $\bar P \triangleq [P(y;\theta_{N_L+1}), ..., P(y;\theta_{N_L+N_R})]^T$ contains all conjunctive radial basis functions. Here, $N_L$ and $N_R$ are non-negative integers. In this article, the \textit{augSILL} (augmented SILL) basis includes $N_L$ conjunctive logistic functions and $N_R$ conjunctive RBFs. To our knowledge this is the first time that anyone has analyzed the behavior of mixed dictionary functions for the numerical approximation of a Koopman operator or Koopman generator.

Can a mixture of function basis form a coherent basis that preserves subspace invariance, or at least uniform finite approximate closure? Certainly the more varied basis facilitates approximating the vector field, $F(y)$. But the subspace invariance properties are just as important to Koopman models as approximating the vector field (see Example \ref{example}).

\section{Uniform Finite Approximate Closure of the SILL Dictionary}\label{sec:SILLclosure}
We showed in \cite{johnson2018class} that a pure, measurement-inclusive, SILL dictionary can make effective low-dimensional Koopman models. Between this section and Appendix \ref{sec:averageErrorSILL} we demonstrate that the homogeneous SILL dictionary satisfies uniform finite approximate closure (see Definition \ref{def:uniform} in Section \ref{sec:error}).  In essence, uniform finite approximate closure guarantees that the dimensionality of the dictionary space does not need to diverge to infinity, while simultaneously approximating the vector field $F(y)$ sufficiently well.  This property is what makes numerical approximation of the Koopman generator equation possible. 

Recall from Example \ref{example} that when considering a new set of dictionary functions, we also have to consider the effects of dictionary explosion.  In Example \ref{example}, we computed the Lie derivatives of each dictionary function, which in turn generated new product terms that were not in the span of the existing dictionary.  Thus, for any new class of dictionary functions, a key property requisite to uniform approximate finite closure is the ability to approximate products of dictionary functions as elements of the span of the dictionary.


The first step is to show convergence in steepness of products of conjunctive logistic functions to a single conjunctive logistic function. We hypothesize that the product of two conjunctive logistic functions may be approximated as a single conjunctive radial basis function as follows (see Fig. \ref{fig:rbftsig}\textbf{a}):
\begin{equation}\label{eq:approx1}
    \Lambda(y;\theta_l)\Lambda(y;\theta_j) \approx \Lambda(y;\theta^*)
\end{equation}
where $\theta^* = (\mu_{max}(l,j), \alpha_{max}(l,j))$ and 
\begin{equation} 
\mu_{max}(l,j) = \left( \max\{\mu_{l1},\mu_{j1}\}, ..., \max\{\mu_{lm},\mu_{jm}\} \right),
\end{equation} and $\alpha_{max}(l,j)$ are the $\alpha$ values that correspond to the indices of the $\mu$'s.

Theorem \ref{thm:SILLconv} demonstrates that the approximation in Eq. (\ref{eq:approx1}) is a good approximation in the limit of increasing steepness parameter $\alpha$.  The proofs of all theorems and corollaries that are not included in the body this article are included in Section \ref{sec:AppendixProofs} of the Appendix.

\begin{theorem}\label{thm:SILLconv}
Under Assumption \ref{assump:order}, if the dictionary functions do not exactly match their corresponding center parameters, $y_i\neq\mu_{ji} $ for all $ i\in \{1, 2, ..., m\}$ and  $j\in\{1,2,...,N\}$, then, as the steepness parameters go to infinity, the product of two conjunctive logistic function will exponentially approach a single conjunctive logistic function in the dictionary, $\alpha\rightarrow\infty$, \[\Lambda(x;\theta_l)  \Lambda(x;\theta_j) - \Lambda(x;\theta^*)\rightarrow 0\] exponentially.
\end{theorem}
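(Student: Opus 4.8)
\medskip\noindent\textbf{Proof strategy.}
Since both $\Lambda(y;\theta_l)\Lambda(y;\theta_j)$ and $\Lambda(y;\theta^*)$ are products over the $m$ coordinates of scalar logistic factors taking values in $[0,1]$, the plan is to establish the estimate one coordinate at a time and then lift it to the product. For the scalar statement, fix a coordinate $i$; using Assumption~\ref{assump:order} I would reduce to the case $\mu_{li}\ge\mu_{ji}$, so that $\mu^*_i=\mu_{li}$, the matching steepness is $\alpha^*_i=\alpha_{li}$, and hence $\lambda(y_i;\theta^*_i)=\lambda(y_i;\theta_{li})$. The coordinate error then factors as
\[
\left|\lambda(y_i;\theta_{li})\lambda(y_i;\theta_{ji})-\lambda(y_i;\theta_{li})\right|
=\lambda(y_i;\theta_{li})\,\bigl(1-\lambda(y_i;\theta_{ji})\bigr),
\]
and I would split on the sign of $y_i-\mu_{ji}$ (the value $y_i=\mu_{ji}$ being excluded by hypothesis). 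Using the elementary logistic tail bounds $1-\lambda(y_i;\mu,\alpha)\le e^{-\alpha(y_i-\mu)}$ when $y_i>\mu$ and $\lambda(y_i;\mu,\alpha)\le e^{-\alpha(\mu-y_i)}$ when $y_i<\mu$: in the branch $y_i>\mu_{ji}$ the factor $1-\lambda(y_i;\theta_{ji})$ is at most $e^{-\alpha_{ji}(y_i-\mu_{ji})}$, while in the branch $y_i<\mu_{ji}\le\mu_{li}$ we also have $y_i<\mu_{li}$, which forces $\lambda(y_i;\theta_{li})\le e^{-\alpha_{li}(\mu_{li}-y_i)}$. Either way the coordinate error is at most $e^{-\alpha d_i}$ for a number $d_i\ge\min\{|y_i-\mu_{li}|,\,|y_i-\mu_{ji}|\}>0$, where $\alpha$ denotes the relevant steepness being driven to infinity. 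This is exactly where the hypothesis $y_i\ne\mu_{ji},\mu_{li}$ is used: it keeps $d_i$ bounded away from $0$, upgrading mere convergence to exponential convergence.

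To pass from coordinates to the full conjunctive functions, set $a_i=\lambda(y_i;\theta_{li})\lambda(y_i;\theta_{ji})$ and $b_i=\lambda(y_i;\theta^*_i)$, so $0\le a_i,b_i\le1$. The telescoping identity
\[
\prod_{i=1}^m a_i-\prod_{i=1}^m b_i=\sum_{k=1}^m\Bigl(\prod_{i<k}a_i\Bigr)(a_k-b_k)\Bigl(\prod_{i>k}b_i\Bigr)
\]
together with $|a_i|,|b_i|\le1$ yields
\[
\left|\Lambda(y;\theta_l)\Lambda(y;\theta_j)-\Lambda(y;\theta^*)\right|\le\sum_{k=1}^m|a_k-b_k|\le\sum_{k=1}^m e^{-\alpha d_k},
\]
a finite sum of exponentially small terms, hence $O\!\left(e^{-\alpha\min_k d_k}\right)$, which tends to $0$ exponentially as the steepness parameters diverge, giving the claim.

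The one genuinely delicate point will be the bookkeeping of steepness parameters: one must verify that the exponent controlling each coordinate bound is one of the steepnesses actually sent to infinity -- in the branch where $\lambda(y_i;\theta_{li})$ must be small the decay is governed by $\alpha_{li}$, and in the branch where $1-\lambda(y_i;\theta_{ji})$ must be small it is governed by $\alpha_{ji}$ -- and that Assumption~\ref{assump:order} is what legitimizes the global reduction to $\mu_{li}\ge\mu_{ji}$ rather than forcing a separate orientation in each coordinate. Apart from this, the argument reduces to the two logistic tail estimates above combined with the product-telescoping bound.
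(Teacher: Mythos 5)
Your argument is correct and rests on the same core mechanism as the paper's proof: a coordinatewise case split on the sign of $y_i-\mu_{ji}$ (with equality excluded by hypothesis), the exponential tails of the scalar logistic, and Assumption~\ref{assump:order} to identify $\Lambda(y;\theta^*)$ with the dominating function's factors. The only difference is in the assembly and it is a welcome one: where the paper manipulates the full product in ratio form and argues the dichotomy ``some coordinate has $y_i<\mu_{ji}$'' versus ``all coordinates have $y_i>\mu_{ji}$,'' you bound each coordinate's error and combine via the telescoping identity, which yields an explicit rate $O\!\left(e^{-\alpha\min_k d_k}\right)$ that the paper's proof leaves implicit.
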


\begin{remark}
Given a finite SILL dictionary that does not satisfy Assumption \ref{assump:order}, one can enforce that Assumption \ref{assump:order} holds by adding a finite number of additional conjunctive logistic functions to the basis.
\end{remark}

Theorem \ref{thm:SILLconv} implies an intermediary result to demonstrating uniform finite approximate closure of the SILL basis. This result connects Theorem \ref{thm:SILLconv} to the Lie derivative of a SILL dictionary function. 

\begin{corollary}\label{cor:logSILLApprox}
Under the assumptions of Theorem \ref{thm:SILLconv}, when $F$ is spanned by a SILL dictionary, the Lie derivative of a conjunctive logistic function exponentially approaches a finite weighted sum of conjunctive logistic functions as the steepnesses of the functions goes to infinity. Specifically, \be\label{eq:nonlinearCombforSILL}\dot\Lambda(x;\theta_l)\rightarrow \sum_{i=1}^{n}\sum_{j=1}^{N}\alpha_{li}w_{ij}(1-\lambda(x_i;\theta_{li}))\Lambda(x;\theta^*)\ee exponentially as $\alpha\rightarrow\infty$.
\end{corollary}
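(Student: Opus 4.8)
The plan is to differentiate $\Lambda(x;\theta_l)$ along the flow of the system, substitute the SILL expansion of the vector field, and then apply Theorem~\ref{thm:SILLconv} to each resulting product of two conjunctive logistic functions. First I would compute the Lie derivative of $\Lambda(x;\theta_l)=\prod_{i=1}^{n}\lambda(x_i;\theta_{li})$ using the chain and product rules. Since $\partial_{x_i}\lambda(x_i;\theta_{li})=\alpha_{li}\lambda(x_i;\theta_{li})(1-\lambda(x_i;\theta_{li}))$, each partial derivative factors back through $\Lambda$ itself: $\partial_{x_i}\Lambda(x;\theta_l)=\alpha_{li}(1-\lambda(x_i;\theta_{li}))\Lambda(x;\theta_l)$. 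Hence $\dot\Lambda(x;\theta_l)=\sum_{i=1}^{n}\alpha_{li}(1-\lambda(x_i;\theta_{li}))\,\Lambda(x;\theta_l)\,\dot x_i$.

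Next I would use the hypothesis that $F$ is spanned by a SILL dictionary to expand each coordinate of the vector field as $\dot x_i=F_i(x)=\sum_{j=1}^{N}w_{ij}\Lambda(x;\theta_j)$ (an affine entry of the dictionary, if present, simply produces a term already of the target form with $\theta^*=\theta_l$). Substituting gives $\dot\Lambda(x;\theta_l)=\sum_{i=1}^{n}\sum_{j=1}^{N}\alpha_{li}w_{ij}(1-\lambda(x_i;\theta_{li}))\,\Lambda(x;\theta_l)\Lambda(x;\theta_j)$, a finite linear combination of products of \emph{two} conjunctive logistic functions, each carrying the bounded factor $(1-\lambda(x_i;\theta_{li}))\in(0,1)$ and the (possibly large) prefactor $\alpha_{li}w_{ij}$.

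Then I would apply Theorem~\ref{thm:SILLconv}: under Assumption~\ref{assump:order} and the assumption $y_i\neq\mu_{ji}$, every product $\Lambda(x;\theta_l)\Lambda(x;\theta_j)$ converges exponentially in the steepness parameters to the single dictionary function $\Lambda(x;\theta^*)$ whose centers are the coordinatewise maxima $\mu_{max}(l,j)$. Replacing each product by its limit yields precisely the claimed sum $\sum_{i=1}^{n}\sum_{j=1}^{N}\alpha_{li}w_{ij}(1-\lambda(x_i;\theta_{li}))\Lambda(x;\theta^*)$, with the understanding that $\theta^*=\theta^*(l,j)$ depends on the summation indices.

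The step I expect to be the main obstacle is bounding the residual, because the prefactors $\alpha_{li}$ themselves diverge as $\alpha\to\infty$, so I must check that multiplying an exponentially small quantity by $\alpha_{li}$ still leaves something that vanishes exponentially. I would write the total error as the finite sum over $(i,j)$ of $\alpha_{li}w_{ij}(1-\lambda(x_i;\theta_{li}))[\Lambda(x;\theta_l)\Lambda(x;\theta_j)-\Lambda(x;\theta^*(l,j))]$, bound each bracket by $Ce^{-c\alpha}$ from Theorem~\ref{thm:SILLconv}, use $|1-\lambda|\le 1$ and boundedness of the weights, and absorb the polynomial factor $\alpha_{li}$ into the exponential via $\alpha e^{-c\alpha}\le C'e^{-c'\alpha}$ for any $c'<c$. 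A finite sum of $O(e^{-c'\alpha})$ terms is again $O(e^{-c'\alpha})$, giving the asserted exponential convergence.
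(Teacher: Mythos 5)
Your proposal follows essentially the same route as the paper's proof: compute the gradient of the conjunctive logistic function so that $\partial_{y_i}\Lambda(y;\theta_l)=\alpha_{li}(1-\lambda(y_i;\theta_{li}))\Lambda(y;\theta_l)$, substitute the SILL expansion $F_i(y)=\sum_{j}w_{ij}\Lambda(y;\theta_j)$ to obtain the double sum of bilinear terms, and then invoke Theorem~\ref{thm:SILLconv} term by term. Your closing observation that the diverging prefactors $\alpha_{li}$ must be absorbed into the exponential decay (via $\alpha e^{-c\alpha}\le C'e^{-c'\alpha}$) is a point the paper's proof passes over silently, and it is a worthwhile addition rather than a deviation.
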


We now show the second step in an alternate path to characterizing the finite closure properties of the SILL dictionary. This is to approximate the SILL observables' Lie derivative as linear combination of SILL basis functions, \be\label{eq:SILL_finalApprox}\sum_{i=1}^{m}\sum_{j=1}^{N}\alpha_{li}w_{ij}\Lambda(y;\theta^*). \ee This step corresponds to the lower right arrow in the right side of Fig. \ref{fig:paper_summary}.

\begin{corollary}
Under the assumptions of Theorem \ref{thm:SILLconv}, the error between
\be\label{eq:SILL_linearLie} 
\sum_{i=1}^{m}\sum_{j=1}^{N}\alpha_{li}w_{ij}\Lambda(y;\theta_l)\Lambda(y;\theta_j)
\ee and Eq. (\ref{eq:SILL_finalApprox}) goes to zero exponentially as $\alpha\rightarrow \infty$.
\end{corollary}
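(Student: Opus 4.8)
The plan is to reduce the claim to a termwise application of Theorem~\ref{thm:SILLconv}. Both Eq.~(\ref{eq:SILL_linearLie}) and Eq.~(\ref{eq:SILL_finalApprox}) are double sums over the same finite index set $\{1,\dots,m\}\times\{1,\dots,N\}$ carrying the same coefficients $\alpha_{li}w_{ij}$, so their difference is
\[
\sum_{i=1}^{m}\sum_{j=1}^{N}\alpha_{li}w_{ij}\bigl(\Lambda(y;\theta_l)\Lambda(y;\theta_j)-\Lambda(y;\theta^{*})\bigr),
\]
where $\theta^{*}=\theta^{*}(l,j)$ is the merged center/steepness parameter of Eq.~(\ref{eq:approx1}). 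By the triangle inequality, the magnitude of this difference is at most $\sum_{i,j}|\alpha_{li}|\,|w_{ij}|\,\bigl|\Lambda(y;\theta_l)\Lambda(y;\theta_j)-\Lambda(y;\theta^{*})\bigr|$, so it suffices to control each bracketed factor.

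Next I would invoke Theorem~\ref{thm:SILLconv}: under Assumption~\ref{assump:order} and the hypothesis $y_i\neq\mu_{ji}$, for each pair $(l,j)$ there are constants $C_{lj},c_{lj}>0$ with $\bigl|\Lambda(y;\theta_l)\Lambda(y;\theta_j)-\Lambda(y;\theta^{*})\bigr|\le C_{lj}e^{-c_{lj}\alpha}$ as the steepness $\alpha\to\infty$. Because the index set is finite, I can set $c=\min_{l,j}c_{lj}>0$ and $C=\max_{l,j}C_{lj}$, which gives a single exponential rate valid for every bracket. The weights $w_{ij}$ are the fixed coefficients of the SILL representation of $F$ inherited from Corollary~\ref{cor:logSILLApprox} and are bounded independently of the steepness, while each $\alpha_{li}$ grows only linearly (at worst polynomially) in $\alpha$. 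Hence the difference is bounded above by an expression of the form $mN\cdot(\text{const})\cdot\alpha\cdot e^{-c\alpha}$, which tends to $0$ exponentially, since polynomial growth is dominated by exponential decay.

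The main difficulty is bookkeeping rather than new analysis. One must (i) check that the exponential rate of Theorem~\ref{thm:SILLconv} can be made uniform over the finitely many pairs $(l,j)$, which is immediate by taking a minimum, and (ii) confirm that the prefactors $\alpha_{li}w_{ij}$ do not grow faster than $e^{-c\alpha}$ shrinks; the only genuinely delicate point is the explicit, diverging factor $\alpha_{li}$, but exponential-in-$\alpha$ convergence beats its linear growth. If a bound that is also uniform in $y$ over a region $\mathcal{R}$ is desired, one further needs the constants $C_{lj},c_{lj}$ from Theorem~\ref{thm:SILLconv} to be selectable uniformly on $\mathcal{R}$, which is precisely where a restriction such as keeping $y$ bounded away from the center hyperplanes $y_i=\mu_{ji}$ must enter.
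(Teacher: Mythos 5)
Your proposal is correct and follows essentially the same route as the paper, whose entire proof is a one-line direct application of Theorem \ref{thm:SILLconv} to each term of the finite double sum. Your additional bookkeeping --- taking the minimum exponential rate over the finitely many pairs $(l,j)$ and checking that the diverging prefactors $\alpha_{li}$ are dominated by the exponential decay --- is a sound and slightly more careful elaboration of the same argument.
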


\begin{proof}
The proof follows from a direct application of Theorem \ref{thm:SILLconv}.
$\blacksquare$\end{proof}

The end result of Eq. (\ref{eq:nonlinearCombforSILL}) is a nonlinear combination of dictionary functions and Eq. (\ref{eq:SILL_linearLie}) is an approximation of the Lie derivative, $\dot\Lambda$.  We explore how nearly linear this nonlinear combination is and the approximation quality in Appendix \ref{sec:averageErrorSILL}. Appendix \ref{sec:averageErrorSILL} also ties these results to conclude that the SILL dictionary satisfies uniform finite approximate closure and that the bounding constant goes to zero, $B\rightarrow 0$, exponentially as steepness of the dictionary functions increases and the number of measurements increase. Thus, SILL dictionary functions define a spanning set for Koopman observables.

\begin{figure*}[ht]
    \centering
    \includegraphics[width=450pt]{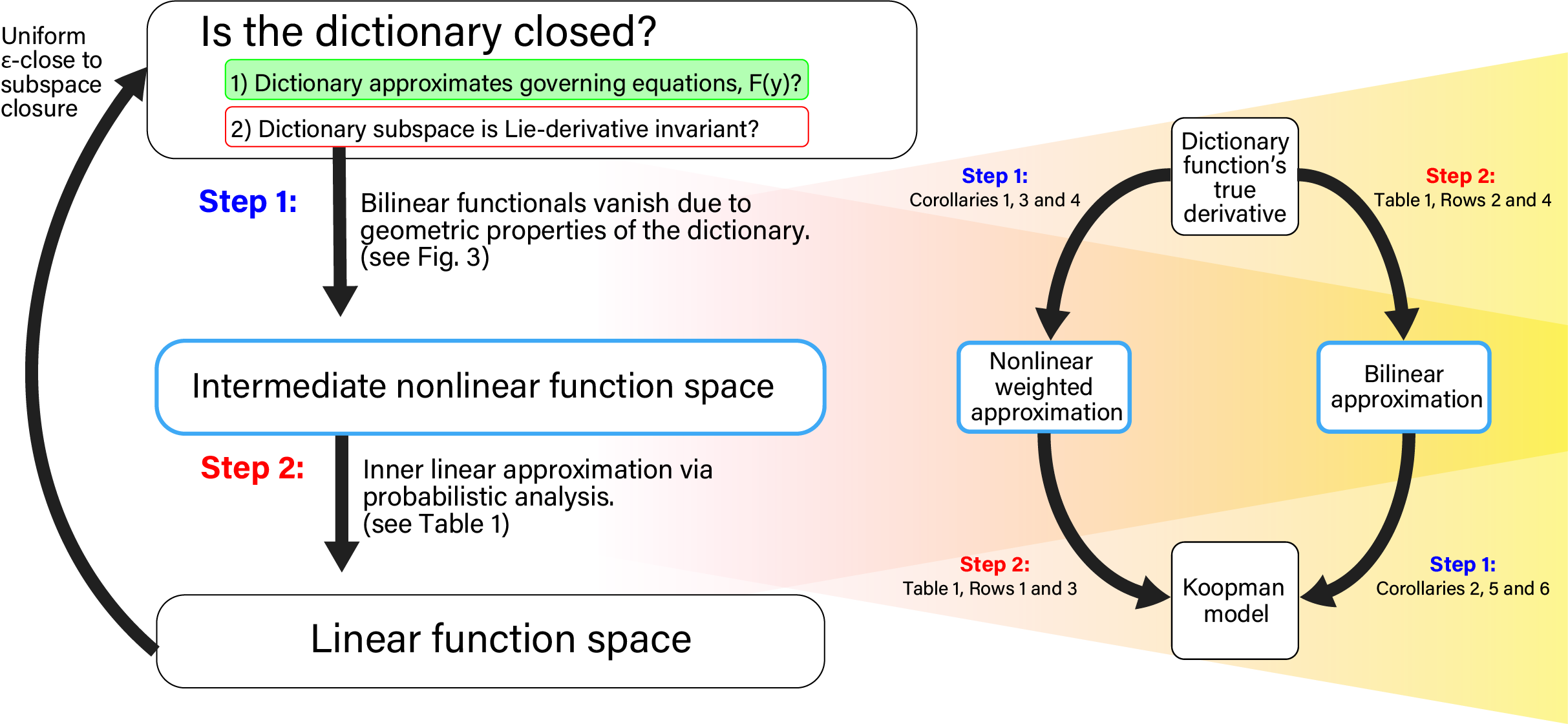}
    \caption{A visual outline of the (commutative) steps taken to prove uniform finite approximate closure of SILL dictionaries in Section \ref{sec:SILLclosure} and AugSILL dictionaries in Section \ref{sec:AugSILLclosure}. Full proofs for each of the theorems and corollaries are provided in Appendix \ref{sec:AppendixProofs}.}
    \label{fig:paper_summary}
\end{figure*}

\begin{figure*}[ht]
\centering
\includegraphics[width=450pt]{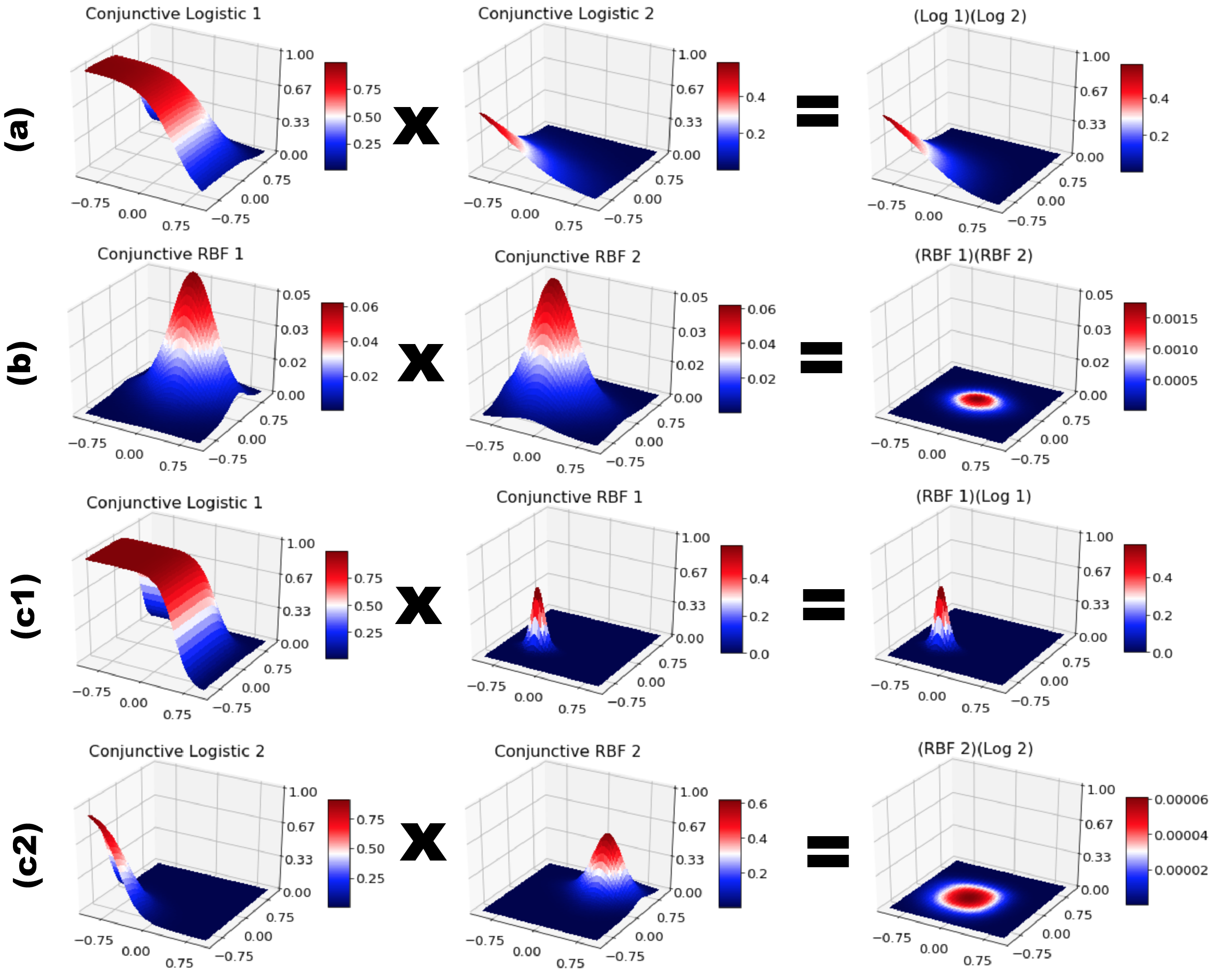}
\caption{ Visual representation of Theorems \ref{thm:SILLconv}, \ref{thm:augSILLconv1}, \ref{thm:augSILLconv2} and \ref{thm:augSILLconv4}.
\textbf{(a)} Theorem \ref{thm:SILLconv}: The product of two conjunctive logistic functions approximates the conjunctive logistic function whose centers are greater in each dimension.
\textbf{(b)} Theorem \ref{thm:augSILLconv4}: The product of two conjunctive RBFs is nearly zero unless the norm of the difference between their centers is very small.
\textbf{(c1)} Theorem \ref{thm:augSILLconv1}: The product of a conjunctive logistic function and a conjunctive RBF approximates the RBF if at least one dimension of the center in the logistic function is greater than its partner center in the RBF.  \textbf{(c2)} Theorem \ref{thm:augSILLconv2}: When this is not the case the product is nearly zero.
}
\label{fig:rbftsig}
\end{figure*}

\section{Uniform Finite Approximate Closure of the AugSILL Dictionary}\label{sec:AugSILLclosure}

In this section we analyze the subspace invariance properties of a mixed dictionary, the augSILL dictionary. To do so, we  characterize the error term in Eq. (\ref{eq:overallError}). In this section we show that the augSILL dictionary satisfies uniform finite approximate closure.


In  Sections \ref{sec:convThms1} and \ref{sec:convThms3} we establish theorems analogous to Theorem \ref{thm:SILLconv} and its corollaries for the augSILL basis.  In Section \ref{sec:augSILL_error} we demonstrate probabilistic results that combine with the corollaries in Section \ref{sec:convThms3} to uniformly bind average error of Eq. (\ref{eq:overallError}) with a constant, $B$, that can be arbitrarily small.  This approach applies to the SILL and augSILL dictionaries to show uniform finite approximate closure. 

\subsection{Convergence in Steepness of Bilinear AugSILL Terms to the Span of the AugSILL Dictionary}\label{sec:convThms1}
When we construct a mixed dictionary and compute their Lie derivatives we find mixed bilinear terms involving two types of  dictionary functions. In Section \ref{sec:SILLclosure} we demonstrated that the product of two conjunctive logistic functions can be well approximated by a single conjunctive logistic function. Here we approximate the product of a conjunctive logistic and RBF. For this approximation, (see Fig. \ref{fig:rbftsig}\textbf{c1-c2}) we have that: \begin{equation}\label{eq:approx2}
    \Lambda(y;\theta_l)P(y;\theta_k) \approx H(y;\theta_l, \theta_k)
    \end{equation} where 
\begin{equation}
H(y;\theta_l, \theta_k) \triangleq \begin{cases}
P(y;\theta_k)\mbox{ if }  \theta_l \lesssim \theta_k \\
0 \mbox{ otherwise.}
\end{cases}
\end{equation} The function, $H$, uses the relative centers of $\Lambda$ and $P$ to choose to take on the value of the conjunctive RBF, $P$, or zero. Both functions are in our dictionary as the zero function is trivially available.

In these theorems, we assume that the measurements do not exactly match up with the function centers.  This assumption is reasonable as such an exact line-up in the state space is unlikely.

In the case where the center of $\Lambda(y;\theta_l)$ is less than the center of $P(y;\theta_k)$ in at least one dimension we have that:
\begin{equation}\label{eq:P_errorterm1}
\begin{aligned}
\Lambda(y;\theta_l) & P(y;\theta_k) -  H(y;\theta_l,\theta_k) \\ 
&= (\Lambda(y;\theta_l) - 1)P(y;\theta_k). \\
\end{aligned}
\end{equation}  

Theorem \ref{thm:augSILLconv1} demonstrates that Eq. (\ref{eq:P_errorterm1}) goes to zero exponentially in the limit of the steepness parameters, $\alpha$.

\begin{theorem}\label{thm:augSILLconv1}
When the measurements do not exactly align with the centers of the dictionary functions in any dimension and the center of the conjunctive RBF is more positive than the center of the conjunctive logistic function in some dimension, then their product exponentially converges to the conjunctive RBF as their steepness parameters go to infinity. Specifically, If $y_i\neq\mu_{ki} $ for all $ i\in \{1, 2, ..., m\}$ and $k\in\{N_L+1,N_L+2,...,N\}$  and there exists $i^*\in\{1, 2, ..., m\}$ so that $\mu_{ki^*} \geq \mu_{li^*}$, then as $\alpha\rightarrow\infty$, $\Lambda(y;\theta_l) P(y;\theta_k) -  H(y;\theta_l,\theta_k)\rightarrow 0$ exponentially.
\end{theorem}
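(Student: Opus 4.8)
The plan is to work directly from the error identity already recorded in Eq.~(\ref{eq:P_errorterm1}). Since the hypothesis ``there exists $i^*$ with $\mu_{ki^*}\geq\mu_{li^*}$'' is exactly the condition $\theta_l\lesssim\theta_k$ that selects the branch $H(y;\theta_l,\theta_k)=P(y;\theta_k)$, the quantity to be controlled is
\[
\Lambda(y;\theta_l)P(y;\theta_k)-H(y;\theta_l,\theta_k)=\bigl(\Lambda(y;\theta_l)-1\bigr)P(y;\theta_k).
\]
The key structural observation is that this is a product of a \emph{uniformly bounded} factor and an \emph{exponentially vanishing} factor: the conjunctive logistic supplies the bound and the conjunctive RBF supplies the decay. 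Since each scalar logistic $\lambda(y_i;\theta_{li})$ lies in $(0,1)$, the product $\Lambda(y;\theta_l)$ lies in $(0,1)$, so $|\Lambda(y;\theta_l)-1|<1$ for every $y$ and every choice of steepnesses. (It is worth remarking that $\Lambda-1$ need not go to zero --- it does so only when $y_i>\mu_{li}$ in every coordinate --- which is precisely why the RBF factor is needed to force the product to vanish.)

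Next I would establish exponential decay of $P(y;\theta_k)$. The elementary estimate
$\rho(y_i;\mu_{ki},\alpha_{ki})=\tfrac{e^{-z}}{(1+e^{-z})^2}\le e^{-|z|}$ with $z=\alpha_{ki}(y_i-\mu_{ki})$ handles both signs of $y_i-\mu_{ki}$ uniformly (for $z\le 0$, multiply numerator and denominator by $e^{2z}$ to rewrite $\rho=\tfrac{e^{z}}{(1+e^{z})^2}\le e^{z}$). Hence $\rho(y_i;\theta_{ki})\le e^{-\alpha_{ki}|y_i-\mu_{ki}|}$, and multiplying over the (fixed, finite) set of coordinates,
\[
P(y;\theta_k)=\prod_{i=1}^{m}\rho(y_i;\theta_{ki})\le\exp\Bigl(-\sum_{i=1}^{m}\alpha_{ki}\,|y_i-\mu_{ki}|\Bigr).
\]
Because $y_i\neq\mu_{ki}$ for all $i$, the constant $c\triangleq\sum_{i=1}^{m}|y_i-\mu_{ki}|$ is strictly positive, so if all steepness parameters exceed a common value $\alpha$ then $P(y;\theta_k)\le e^{-\alpha c}$. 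Combining with $|\Lambda(y;\theta_l)-1|<1$ gives $\bigl|\Lambda(y;\theta_l)P(y;\theta_k)-H(y;\theta_l,\theta_k)\bigr|< e^{-\alpha c}\to 0$ exponentially, which is the assertion. Note that neither Assumption~\ref{assump:order} nor Theorem~\ref{thm:SILLconv} is invoked: the logistic center enters only through the case split that defines $H$, not through the estimate.

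The one point I would be most careful about --- and the only place where any subtlety lies --- is stating precisely what ``the steepness parameters go to infinity'' means and pinning the exponential rate to it. The bound above shows the error is dominated by $\exp\bigl(-\sum_i\alpha_{ki}|y_i-\mu_{ki}|\bigr)$, so the limit should be phrased either as each $\alpha_{ki}\to\infty$, with rate governed by the term contributed by \emph{every} coordinate (hence by the slowest one, using that $|y_i-\mu_{ki}|>0$ for all $i$), or, in line with the companion theorems, as a single scalar multiplier $\alpha\to\infty$, in which case the rate is exactly $e^{-\alpha c}$ with $c>0$ fixed by the mismatch between the evaluation point and the RBF center. This amounts to routine bookkeeping rather than a genuine obstacle.
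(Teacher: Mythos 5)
Your proof is correct and follows essentially the same route as the paper's: both reduce to the identity $\Lambda(y;\theta_l)P(y;\theta_k)-H(y;\theta_l,\theta_k)=(\Lambda(y;\theta_l)-1)P(y;\theta_k)$, bound the logistic factor uniformly by $1$, and show the conjunctive RBF factor vanishes exponentially as the steepnesses grow. Your version is in fact slightly sharper, replacing the paper's coordinate-wise case analysis with the explicit uniform bound $\rho(y_i;\theta_{ki})\le e^{-\alpha_{ki}|y_i-\mu_{ki}|}$ and hence an explicit rate $e^{-\alpha c}$.
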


When the center of $\Lambda(y;\theta_l)$ is greater than the center of $P(y;\theta_k)$ in all dimensions we have that:
\begin{equation}\label{eq:P_errorterm2}
\begin{aligned}
\Lambda(y;\theta_l) & P(y;\theta_k) -  H(y;\theta_l, \theta_k) \\ 
&= \Lambda(y;\theta_l)P(y;\theta_k) \\
\end{aligned}
\end{equation}

Theorem \ref{thm:augSILLconv2} demonstrates that the approximation in Eq. (\ref{eq:P_errorterm2}) is a good approximation in the limit of $\alpha$.  

\begin{theorem}\label{thm:augSILLconv2}
When the measurements do not exactly align with the centers of the dictionary functions in any dimension and the center of the conjunctive logistic function is more positive than the center of the RBF in all dimensions, then their product exponentially converges to zero as their steepness parameters go to infinity. Specifically, if $y_i\neq\mu_{ki}$ and $y_i\neq\mu_{li} $ for all $ i\in \{1, 2, ..., m\}$ and for all $k\in\{N_L+1, N_L+2, ..., N\}$ so that $\mu_{ki} < \mu_{li}$, then as $\alpha\rightarrow\infty$, \[\Lambda(y;\theta_l) P(y;\theta_k) -  H(y;\theta_l, \theta_k) \rightarrow 0\] exponentially.
\end{theorem}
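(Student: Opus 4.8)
The plan is to exploit the fact that both $\Lambda(y;\theta_l)$ and $P(y;\theta_k)$ factor coordinatewise, so the whole product reduces to a product of scalar estimates. The first step is to recognize that the hypothesis $\mu_{ki}<\mu_{li}$ for \emph{every} $i$ is exactly the ``otherwise'' branch in the definition of $H$ (the complement of the condition $\theta_l\lesssim\theta_k$ used in Theorem~\ref{thm:augSILLconv1}), so here $H(y;\theta_l,\theta_k)=0$ and the error to be controlled is simply $\Lambda(y;\theta_l)P(y;\theta_k)$. Since $\Lambda(y;\theta_l)=\prod_{i=1}^{m}\lambda(y_i;\theta_{li})$ with each $\lambda(y_i;\theta_{li})\in(0,1)$, the logistic factor contributes at most $1$; all of the decay will come from the RBF factor $P(y;\theta_k)=\prod_{i=1}^{m}\rho(y_i;\theta_{ki})$.

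The key estimate is a two-sided exponential bound on the scalar RBF. Writing $u_i=e^{-\alpha_{ki}(y_i-\mu_{ki})}$, we have $\rho(y_i;\theta_{ki})=u_i/(1+u_i)^2$, and from $(1+u_i)^2>u_i^2$ and $(1+u_i)^2>1$ we get $\rho(y_i;\theta_{ki})<\min\{u_i,\,u_i^{-1}\}=e^{-\alpha_{ki}|y_i-\mu_{ki}|}$; the assumption $y_i\neq\mu_{ki}$ is precisely what makes this exponent strictly negative. Multiplying over $i$ and using the crude logistic bound then gives
\begin{equation}
\Lambda(y;\theta_l)\,P(y;\theta_k)\;\le\;\prod_{i=1}^{m}e^{-\alpha_{ki}|y_i-\mu_{ki}|}\;=\;\exp\!\left(-\sum_{i=1}^{m}\alpha_{ki}\,|y_i-\mu_{ki}|\right).
\end{equation}
Under the common steepness scaling $\alpha_{ki}=\alpha\,\bar\alpha_{ki}$ with fixed $\bar\alpha_{ki}>0$ (the convention of Section~\ref{sec:AppendixNotation}), the right-hand side is $\exp\!\big(-\alpha\sum_{i}\bar\alpha_{ki}|y_i-\mu_{ki}|\big)$, and the rate $\sum_{i}\bar\alpha_{ki}|y_i-\mu_{ki}|$ is strictly positive because every $|y_i-\mu_{ki}|>0$. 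Hence $\Lambda(y;\theta_l)P(y;\theta_k)-H(y;\theta_l,\theta_k)=\Lambda(y;\theta_l)P(y;\theta_k)\to 0$ exponentially as $\alpha\to\infty$, which is the assertion.

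I do not expect a serious obstacle here: unlike Theorem~\ref{thm:augSILLconv1}, whose error term $(\Lambda(y;\theta_l)-1)P(y;\theta_k)$ requires quantifying how fast the logistic saturates to $1$, the present case only asks for an upper bound on a product one of whose factors already decays. The single point needing care is the RBF estimate --- the naive bound $\rho\le 1/4$ is insufficient, and one must either split on the sign of $y_i-\mu_{ki}$ or use the $\min\{u_i,u_i^{-1}\}$ device above to obtain genuine exponential decay valid on both sides of the center. If a statement uniform over a compact measurement region $\mathcal{R}$ bounded away from the centers is wanted, the same bound applies with each $|y_i-\mu_{ki}|$ replaced by its infimum over $\mathcal{R}$, which is what later feeds the uniform-closure argument of Section~\ref{sec:augSILL_error}.
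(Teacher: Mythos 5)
Your proof is correct, and it reaches the conclusion by the same underlying mechanism as the paper --- the coordinatewise factorization of $\Lambda(y;\theta_l)P(y;\theta_k)$ together with the observation that, under the hypotheses of this theorem, $H(y;\theta_l,\theta_k)=0$ so the error is the product itself (this is exactly Eq.~(\ref{eq:P_errorterm2})). The execution, however, is genuinely different and arguably sharper. The paper's proof discards the logistic factor only implicitly and instead runs a nine-way case analysis on the signs of $y_i-\mu_{ki}$ and $y_i-\mu_{li}$, using informal rate-of-growth bookkeeping of the form $\frac{0_*}{\infty}$ and $\frac{\infty}{\infty^2\infty}$; each coordinate factor is shown to vanish in the limit, but no explicit rate is extracted. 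You instead bound the logistic product crudely by $1$ and prove the single two-sided inequality $\rho(y_i;\theta_{ki})=u_i/(1+u_i)^2<\min\{u_i,u_i^{-1}\}=e^{-\alpha_{ki}|y_i-\mu_{ki}|}$, which is valid on both sides of the center and immediately gives the quantitative estimate $\Lambda P\le\exp\bigl(-\sum_i\alpha_{ki}|y_i-\mu_{ki}|\bigr)$. What this buys you: an explicit exponential rate (useful for the uniform-closure discussion over a compact region $\mathcal{R}$, as you note), no need for the informal asymptotic notation, and --- since your bound uses only $y_i\neq\mu_{ki}$ and not the ordering $\mu_{ki}<\mu_{li}$ --- the same inequality applied twice also proves Theorem~\ref{thm:augSILLconv4} essentially for free. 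The only hypothesis you consume from the ordering assumption is the identification of the zero branch of $H$, which is exactly how the paper uses it as well. One small caveat: the paper never pins down precisely how the many steepness parameters tend to infinity jointly; your proportional-scaling convention $\alpha_{ki}=\alpha\bar\alpha_{ki}$ is a reasonable reading, and in fact your bound already tends to zero as soon as any single $\alpha_{ki}\to\infty$, since every remaining factor is at most $1$.
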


Between Theorems \ref{thm:augSILLconv1} and \ref{thm:augSILLconv2} we have that $\Lambda P \approx  H$. Specifically, $\Lambda(y;\theta_l) P(y;\theta_k) \approx  P(y;\theta_k)$ when $\theta_l\lesssim\theta_k$ and $0$ otherwise. The only pathological case excluded from these theorems is when the conjunctive logistic and conjunctive RBF centers exactly match in some dimension. Distance from the pathology becomes relevant when steepness parameters, $\alpha$, are small. 

Now, we approximate the product of two conjunctive RBFs, completing all the possible combinations of pairwise products between elements of our mixed basis. For this approximation (see Fig. \ref{fig:rbftsig}\textbf{b}), we have that 
\begin{equation}\label{eq:approx3}
    P(y;\theta_l)P(y;\theta_k) \approx 0.
\end{equation}

Theorem \ref{thm:augSILLconv4} shows that the approximation in Eq. (\ref{eq:approx3}) is a good approximation in the limit of $\alpha$.



\begin{theorem}\label{thm:augSILLconv4} 
When the measurements do not exactly align with the centers of the dictionary functions in any dimension, then the product of two conjunctive RBFs converges exponentially to zero as their steepness increases. 
Specifically, if $y_i\neq \mu_{ki}$ for all $i\in\{1,2,...,m\}$ and $k\in\{N_L+1,N_L+2,...,N\}$, then as $\alpha\rightarrow\infty$, $P(y;\theta_l)P(y;\theta_k)\rightarrow 0$ exponentially.
\end{theorem}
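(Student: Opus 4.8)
The plan is to reduce the whole statement to an elementary pointwise decay estimate for the scalar RBF $\rho$ and then multiply across coordinates and across the two functions. First I would record the bound that, writing $z_{ki} = \alpha_{ki}(y_i-\mu_{ki})$, the scalar RBF satisfies $0 \le \rho(y_i;\theta_{ki}) = e^{-z_{ki}}/(1+e^{-z_{ki}})^2 \le e^{-|z_{ki}|}$. This is immediate: the map $z \mapsto e^{-z}/(1+e^{-z})^2$ is even (multiply numerator and denominator by $e^{-2z}$ to see $\rho(-z)=\rho(z)$), and for $z\ge 0$ one has $(1+e^{-z})^2\ge 1$, so $\rho(z)\le e^{-z}$. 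Equivalently this follows from the identity $\rho = \lambda - \lambda^2 = \lambda(1-\lambda)$ already noted in the text: for $y_i>\mu_{ki}$, $1-\lambda(y_i;\theta_{ki})\le e^{-z_{ki}}$ and $\lambda\le 1$, and symmetrically for $y_i<\mu_{ki}$.

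Next, since $P(y;\theta_k) = \prod_{i=1}^m \rho(y_i;\theta_{ki})$ is a product of nonnegative factors, the coordinatewise bound gives $0 \le P(y;\theta_k) \le \exp\!\big(-\sum_{i=1}^m \alpha_{ki}|y_i-\mu_{ki}|\big)$, and identically for $P(y;\theta_l)$. Multiplying the two estimates yields $0 \le P(y;\theta_l)P(y;\theta_k) \le \exp\!\big(-\sum_{i=1}^m \alpha_{li}|y_i-\mu_{li}| - \sum_{i=1}^m \alpha_{ki}|y_i-\mu_{ki}|\big)$. (In fact a single factor already decays, and $P(y;\theta_l)\le 4^{-m}$ is bounded, so one could conclude from the one-sided bound alone; the symmetric bound is just cleaner and displays the rate explicitly.)

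Finally I would invoke the hypothesis: because $y_i\neq\mu_{ki}$ and $y_i\neq\mu_{li}$ for every $i$ (both indices being conjunctive-RBF indices), each gap $|y_i-\mu_{li}|$ and $|y_i-\mu_{ki}|$ is a fixed strictly positive number, so the exponent above is at most $-c\,\alpha_{\min}$ with $c = \sum_{i=1}^m\big(|y_i-\mu_{li}|+|y_i-\mu_{ki}|\big) > 0$ and $\alpha_{\min}$ the smallest steepness appearing. Hence $P(y;\theta_l)P(y;\theta_k)\to 0$ at an exponential rate as the steepnesses $\alpha\to\infty$, which is exactly Eq. (\ref{eq:approx3}). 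Unlike Theorems \ref{thm:augSILLconv1} and \ref{thm:augSILLconv2}, there is no case split here — the product is always negligible — so I do not expect a genuine obstacle; the only point requiring care is the excluded pathology, a coordinate with $y_i=\mu_{ki}$, where $\rho$ attains its maximum $\tfrac14$ and the estimate degenerates, which is precisely why the alignment assumption is imposed. A minor bookkeeping matter is to state what ``$\alpha\to\infty$'' means (all steepness parameters growing, or growing proportionally to a common scalar); the bound above is robust to either reading.
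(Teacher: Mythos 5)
Your proof is correct, but it takes a genuinely different route from the paper's. The paper proves this theorem by reference to its Lemma~1: it writes out the $i^{th}$ factor of $P(y;\theta_l)P(y;\theta_k)$ explicitly and runs a table of cases on the signs of $y_i-\mu_{li}$ and $y_i-\mu_{ki}$ (the four corner cells of a $3\times 3$ table, since the alignment hypothesis rules out equality), arguing informally with ``$\infty$'' and ``$0_*$'' growth symbols that at least one factor decays like $e^{-c\alpha}$ while the rest stay bounded by $\tfrac{1}{16}$. You instead prove the single uniform scalar estimate $\rho(y_i;\theta_{ki}) \le e^{-\alpha_{ki}|y_i-\mu_{ki}|}$ (via evenness of $z\mapsto e^{-z}/(1+e^{-z})^2$ and $(1+e^{-z})^2\ge 1$ for $z\ge 0$), then multiply over coordinates and over the two conjunctive RBFs to get the explicit bound $P(y;\theta_l)P(y;\theta_k)\le \exp\bigl(-\sum_i \alpha_{li}|y_i-\mu_{li}|-\sum_i \alpha_{ki}|y_i-\mu_{ki}|\bigr)$. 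Your version eliminates the case analysis entirely, yields a quantitative rate constant $c=\sum_i(|y_i-\mu_{li}|+|y_i-\mu_{ki}|)>0$ rather than an asymptotic symbol, and makes transparent that the hypothesis on a single center (together with $P\le 4^{-m}$ for the other factor) already suffices; the paper's version buys uniformity of method with Lemma~1 and Theorems~\ref{thm:augSILLconv1}--\ref{thm:augSILLconv2}, which genuinely need the sign dichotomy because their limits differ by case. Your closing remarks about the excluded pathology $y_i=\mu_{ki}$ and the ambiguity in the meaning of ``$\alpha\to\infty$'' are both well taken and consistent with the paper's framing.
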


We now have four approximation theorems for mathematical terms which arise when computing the Lie derivatives of augSILL basis functions. We now apply them to these Lie derivatives with a set of approximation corollaries.

\subsection{Showing bilinear Lie derivatives can be approximated linearly to satisfy the Koopman generator equation}\label{sec:convThms3}

Corollary \ref{cor:logApprox} approximates the Lie derivative of a conjunctive logistic function in the context of our mixed basis. The approximation that this corollary suggests is not a Koopman model itself (see Fig. \ref{fig:paper_summary}). In Section \ref{sec:augSILL_error} we approximate this intermediate approximation with a full Koopman model.

\begin{corollary}\label{cor:logApprox}
Under the assumptions of Theorem \ref{thm:SILLconv}, the Lie derivative of a conjunctive logistic function exponentially approaches a nonlinear combination of augSILL functions, specifically,
\be\label{eq:augSILL_limApproxLog}
\dot{\Lambda}(y&;\theta_l) \rightarrow \sum_{i=1}^{m}\sum_{j=1}^{N_L} \alpha_{li}w_{ij}(1 - \lambda(y_i;\theta_{li})) \Lambda(y;\theta^*)  \\& +\sum_{i=1}^{m}\sum_{k=N_L+1}^{N} \alpha_{li}w_{ik}(1 - \lambda(y_i;\theta_{li})) H(y;\theta_{l},\theta_{k})
\ee exponentially as $\alpha\rightarrow\infty$.
\end{corollary}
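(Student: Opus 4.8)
The plan is to expand the Lie derivative $\dot\Lambda(y;\theta_l)$ directly using the chain rule, substitute the assumed closure of the vector field in the augSILL span, and then apply the approximation theorems term by term. First I would write $\dot\Lambda(y;\theta_l) = \sum_{i=1}^m \frac{\partial \Lambda(y;\theta_l)}{\partial y_i} \dot y_i$. Using the product structure $\Lambda(y;\theta_l)=\prod_{p=1}^m\lambda(y_p;\theta_{lp})$ together with the logistic identity $\frac{d}{dy_i}\lambda(y_i;\theta_{li}) = \alpha_{li}\lambda(y_i;\theta_{li})(1-\lambda(y_i;\theta_{li}))$, each partial derivative becomes $\alpha_{li}(1-\lambda(y_i;\theta_{li}))\Lambda(y;\theta_l)$. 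Next I would substitute the hypothesis that $F(y)=\dot y$ lies in the augSILL span: since the dictionary is state-inclusive and mixed, $\dot y_i = \sum_{j=1}^{N_L} w_{ij}\Lambda(y;\theta_j) + \sum_{k=N_L+1}^{N} w_{ik}P(y;\theta_k) + (\text{affine terms in }y\text{ and }1)$. Collecting, $\dot\Lambda(y;\theta_l)$ becomes a sum of bilinear products: $\alpha_{li}w_{ij}(1-\lambda(y_i;\theta_{li}))\,\Lambda(y;\theta_l)\Lambda(y;\theta_j)$ for the logistic-logistic terms, $\alpha_{li}w_{ik}(1-\lambda(y_i;\theta_{li}))\,\Lambda(y;\theta_l)P(y;\theta_k)$ for the logistic-RBF terms, plus lower-order pieces multiplying $\Lambda(y;\theta_l)$ by affine functions of $y$, which are themselves (scaled, translated) conjunctive logistic-type terms already covered by the $j$-sum or negligible in the same regime.

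The second phase is to invoke the convergence theorems on each bilinear term. For the logistic-logistic products I would apply Theorem \ref{thm:SILLconv}: $\Lambda(y;\theta_l)\Lambda(y;\theta_j) \to \Lambda(y;\theta^*)$ exponentially as $\alpha\to\infty$, with $\theta^*$ the componentwise-maximum center pair. For the logistic-RBF products I would apply Theorems \ref{thm:augSILLconv1} and \ref{thm:augSILLconv2}, which together give $\Lambda(y;\theta_l)P(y;\theta_k) \to H(y;\theta_l,\theta_k)$ exponentially (equal to $P(y;\theta_k)$ when $\theta_l\lesssim\theta_k$, else $0$). The bounded prefactors $\alpha_{li}w_{ij}(1-\lambda(y_i;\theta_{li}))$ and $\alpha_{li}w_{ik}(1-\lambda(y_i;\theta_{li}))$ are finite, so multiplying a term that converges exponentially to its limit by such a prefactor preserves exponential convergence (the $\alpha_{li}$ factor is linear in steepness while the convergence gap is exponentially small in steepness, so the product still vanishes). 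Summing the finitely many terms over $i,j,k$ preserves exponential convergence to the stated right-hand side of Eq. (\ref{eq:augSILL_limApproxLog}).

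The main obstacle I anticipate is bookkeeping the lower-order and "leftover" terms cleanly rather than the core estimates, which are immediate applications of the cited theorems. Specifically: (i) the affine part of $\dot y_i$ contributes terms of the form $\alpha_{li}(1-\lambda(y_i;\theta_{li}))\Lambda(y;\theta_l)\cdot(\text{const} + \text{linear in }y)$; one must argue these are either absorbed into the logistic sum (a linear factor times a conjunctive logistic is handled because $y$-inclusivity means $\theta^*$-type terms dominate, or because such products are themselves bounded and the prefactor structure matches) or that they are asymptotically negligible compared to the retained terms — the paper's commuting-diagram in Fig. \ref{fig:paper_summary} suggests these are folded into the same $\Lambda(y;\theta^*)$ and $H$ terms; (ii) one must confirm that the factor $1-\lambda(y_i;\theta_{li})$ does not itself need re-expansion, but since it is a bounded function of $y$ it can be left inside the coefficient, exactly as written in Eq. (\ref{eq:augSILL_limApproxLog}). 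With those clarifications, the corollary follows by assembling the per-term exponential bounds into a single exponential bound on the finite sum, so the proof is essentially "chain rule, then Theorems \ref{thm:SILLconv}, \ref{thm:augSILLconv1}, \ref{thm:augSILLconv2}, termwise."
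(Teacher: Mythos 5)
Your proposal is correct and follows essentially the same route as the paper: chain rule on the product of scalar logistics to get the gradient term $\alpha_{li}(1-\lambda(y_i;\theta_{li}))\Lambda(y;\theta_l)$, substitution of the spanning assumption for $F_i$, and termwise application of Theorems \ref{thm:SILLconv}, \ref{thm:augSILLconv1} and \ref{thm:augSILLconv2}. The one ``obstacle'' you flag --- the affine part of $\dot y_i$ --- does not arise in the paper's argument, because the appendix explicitly represents $F_i$ as a pure weighted sum of conjunctive logistic and RBF terms (deliberately dropping the constant and linear-in-$y$ contributions to avoid exactly the bookkeeping you describe); your observation that the diverging prefactor $\alpha_{li}$ is dominated by the exponentially vanishing convergence gap is a detail the paper leaves implicit.
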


Corollary \ref{cor:rbfApprox} approximates the Lie derivative of a conjunctive RBF in the context of the augSILL basis. 

 \begin{corollary}\label{cor:rbfApprox}
 Under the assumptions of Theorem \ref{thm:SILLconv}, the Lie derivative of a conjunctive RBF exponentially approaches a nonlinear combination of conjunctive RBFs, specifically,
 \be \label{eq:augSILL_limApproxRbf}
\dot{P}(y;\theta_l)& \rightarrow  \sum_{i=1}^{m}\sum_{j=1}^{N_L} \alpha_{li}w_{ij}(1 - 2\lambda(y_i;\theta_{li})) H(y;\theta_{j},\theta_{l})
\ee exponentially as $\alpha\rightarrow \infty$.
 \end{corollary}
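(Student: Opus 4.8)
The plan is to differentiate $P(y;\theta_l)$ along the flow of Eq.~(\ref{eq:nonlinear_system}), rewrite the result as a linear combination of bilinear products of augSILL dictionary functions, and then replace each such product by its limit using Theorems~\ref{thm:augSILLconv1}, \ref{thm:augSILLconv2} and \ref{thm:augSILLconv4}. First I would carry out the exact Lie-derivative computation. Since $P(y;\theta_l)=\prod_{i=1}^m\rho(y_i;\theta_{li})$ splits into univariate factors, the chain rule gives $\dot P(y;\theta_l)=\sum_{i=1}^m\big(\partial_{y_i}P(y;\theta_l)\big)F_i(y)$ with $\partial_{y_i}P(y;\theta_l)=\rho'(y_i;\theta_{li})\prod_{k\ne i}\rho(y_k;\theta_{lk})$. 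Using the identity $\rho=\lambda-\lambda^2$ recorded in Section~\ref{sec:augSILL} together with $\partial_{y_i}\lambda(y_i;\theta_{li})=\alpha_{li}\lambda(y_i;\theta_{li})(1-\lambda(y_i;\theta_{li}))=\alpha_{li}\rho(y_i;\theta_{li})$, one finds $\rho'(y_i;\theta_{li})=\alpha_{li}(1-2\lambda(y_i;\theta_{li}))\rho(y_i;\theta_{li})$, and therefore $\partial_{y_i}P(y;\theta_l)=\alpha_{li}(1-2\lambda(y_i;\theta_{li}))P(y;\theta_l)$. Hence
\[
\dot P(y;\theta_l)=\sum_{i=1}^m\alpha_{li}\,\big(1-2\lambda(y_i;\theta_{li})\big)\,P(y;\theta_l)\,F_i(y).
\]

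Next I would substitute the expansion of the vector field over the augSILL basis, $F_i(y)=\sum_{j=1}^{N_L}w_{ij}\Lambda(y;\theta_j)+\sum_{k=N_L+1}^{N}w_{ik}P(y;\theta_k)$ (the state-inclusive and constant contributions of $F$ being carried along exactly as in Corollary~\ref{cor:logApprox}), and distribute. This produces precisely two families of bilinear products: $P(y;\theta_l)\Lambda(y;\theta_j)$ and $P(y;\theta_l)P(y;\theta_k)$, each multiplied by the bounded weight $\alpha_{li}w_{ij}(1-2\lambda(y_i;\theta_{li}))$ (note $|1-2\lambda|\le 1$). For the first family, Theorem~\ref{thm:augSILLconv1} (when the RBF center $\theta_l$ dominates the logistic center $\theta_j$ in some coordinate, limit $P(y;\theta_l)$) and Theorem~\ref{thm:augSILLconv2} (the complementary case, limit $0$) together give $\Lambda(y;\theta_j)P(y;\theta_l)\to H(y;\theta_j,\theta_l)$ exponentially in $\alpha$. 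For the second family, Theorem~\ref{thm:augSILLconv4} gives $P(y;\theta_l)P(y;\theta_k)\to 0$ exponentially. Collecting the surviving terms then yields exactly the right-hand side of Eq.~(\ref{eq:augSILL_limApproxRbf}): every $P\,P$ contribution drops out and every $P\,\Lambda$ contribution becomes an $H$ term.

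Finally I would bound the residual. The difference between $\dot P(y;\theta_l)$ and the claimed limit is a sum of at most $mN$ terms, each a weight times $\alpha_{li}$ times one of the exponentially small errors furnished by Theorems~\ref{thm:augSILLconv1}, \ref{thm:augSILLconv2}, \ref{thm:augSILLconv4}. The step I expect to demand the most care is this aggregation: the prefactors $\alpha_{li}$ diverge as the steepnesses grow, so one must verify that $\alpha_{li}$ times an $O(e^{-c\alpha})$ quantity is still $O(e^{-c'\alpha})$ for any $c'<c$, and that, because the dictionary is finite, a single exponential rate governs the whole sum for each fixed $y$ off the excluded center-alignment set; the bounded factor $1-2\lambda(y_i;\theta_{li})$ causes no trouble. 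Upgrading this pointwise exponential estimate to a uniform bound over a region $\mathcal{R}$ is precisely the role of the error analysis in Appendix~\ref{sec:averageErrorSILL}. This establishes the stated exponential convergence.
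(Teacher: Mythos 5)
Your proposal is correct and follows essentially the same route as the paper: compute $\left[\nabla_y P(y;\theta_l)\right]_i = \alpha_{li}(1-2\lambda(y_i;\theta_{li}))P(y;\theta_l)$, expand $F_i$ over the augSILL basis to obtain the bilinear families $P\Lambda$ and $PP$, and invoke Theorems \ref{thm:augSILLconv1}, \ref{thm:augSILLconv2} and \ref{thm:augSILLconv4} to pass to the limit. Your added remark about absorbing the diverging prefactors $\alpha_{li}$ into the exponential rate is a care the paper's proof does not spell out, but the argument is the same.
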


Note that Equations (\ref{eq:augSILL_limApproxLog}) and (\ref{eq:augSILL_limApproxRbf}) are not compatible with the Koopman model we seek to learn: $K\psi(y),$ where the dictionary functions, $\psi$, are augSILL functions and $K$ is a real-valued matrix. In Section \ref{sec:augSILL_error} we approximate Equations (\ref{eq:augSILL_limApproxLog}) and (\ref{eq:augSILL_limApproxRbf}) with a mathematical form that is consistent with the approximated Koopman generator Eq. (\ref{eq:koop_approx_behavior}).

One can approximate the Lie derivatives of an augSILL dictionary function as a linear combination of {\em products of pairs} of augSILL functions. Below, we approximate this weighted sum of products (one of the intermediate approximations in Fig. \ref{fig:paper_summary}) with a weighted sum of augSILL functions.  The final weighted sum is of the form of Eq. (\ref{eq:koop_approx_behavior}), and therefore admits a Koopman operator model.

\begin{corollary}\label{cor:logH}
Under the assumptions of Theorem \ref{thm:SILLconv}, the sum of products
\be
&\sum_{i=1}^{m}\sum_{j=1}^{N_L} \alpha_{li}w_{ij} \Lambda(y;\theta_{l})\Lambda(y;\theta_{j})  \\& +\sum_{i=1}^{m}\sum_{k=N_L+1}^{N} \alpha_{li}w_{ik} \Lambda(y;\theta_{l})P(y;\theta_{k}) \\
\ee
approaches
\be
& \sum_{i=1}^{m}\sum_{j=1}^{N_L} \alpha_{li}w_{ij} \Lambda(y;\theta^*)  \\& +\sum_{i=1}^{m}\sum_{k=N_L+1}^{N} \alpha_{li}w_{ik} H(y;\theta_{l},\theta_{k}),
\ee a weighted sum of augSILL functions, exponentially as $\alpha\rightarrow\infty$.
\end{corollary}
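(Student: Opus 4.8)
The plan is to reduce this to linearity of the exponential limit: the total error between the two displays is a finite sum of the bilinear approximation errors already controlled by Theorems~\ref{thm:SILLconv}, \ref{thm:augSILLconv1}, and \ref{thm:augSILLconv2}, each weighted by a coefficient $\alpha_{li}w_{ij}$, so the whole sum inherits exponential convergence provided the $\alpha$-dependent prefactors cannot defeat the exponential rates. First I would subtract the second display from the first termwise and apply the triangle inequality to obtain
\[
\sum_{i=1}^{m}\sum_{j=1}^{N_L}|\alpha_{li}w_{ij}|\,\bigl|\Lambda(y;\theta_l)\Lambda(y;\theta_j)-\Lambda(y;\theta^*(l,j))\bigr|
+\sum_{i=1}^{m}\sum_{k=N_L+1}^{N}|\alpha_{li}w_{ik}|\,\bigl|\Lambda(y;\theta_l)P(y;\theta_k)-H(y;\theta_l,\theta_k)\bigr|,
\]
where $\theta^*(l,j)=(\mu_{max}(l,j),\alpha_{max}(l,j))$ is the composed center from Eq.~(\ref{eq:approx1}). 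For the first group, Theorem~\ref{thm:SILLconv} gives that each factor $\bigl|\Lambda(y;\theta_l)\Lambda(y;\theta_j)-\Lambda(y;\theta^*(l,j))\bigr|$ vanishes exponentially in the steepness parameters, using the inherited hypothesis $y_i\neq\mu_{ji}$. For the second group I would split the index set $\{N_L+1,\dots,N\}$ according to whether $\theta_l\lesssim\theta_k$: Theorem~\ref{thm:augSILLconv1} handles the indices with $\mu_{ki^*}\geq\mu_{li^*}$ for some $i^*$ (where the error equals $(\Lambda(y;\theta_l)-1)P(y;\theta_k)$, cf.\ Eq.~(\ref{eq:P_errorterm1})), and Theorem~\ref{thm:augSILLconv2} handles the rest (where the error equals $\Lambda(y;\theta_l)P(y;\theta_k)$, cf.\ Eq.~(\ref{eq:P_errorterm2})); both tend to zero exponentially in $\alpha$.

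Next I would absorb the prefactors. The weights $w_{ij}$ are fixed constants determined by the hypothesis that $F$ lies in the span of the dictionary, and only the steepnesses $\alpha_{li}$ diverge; since each $\alpha_{li}$ grows at most linearly in the common scaling of $\alpha$ while every individual bilinear error is $O(e^{-c\alpha})$ for some $c>0$, each product $|\alpha_{li}w_{ij}|\cdot(\text{error})$ still tends to zero exponentially, and a finite sum of exponentially decaying terms decays exponentially. Taking the minimum of the finitely many rates $c$ over all index pairs yields a single exponential rate for the full difference, which is the assertion.

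The main obstacle is purely one of bookkeeping rather than of estimation: one must state precisely that the limit $\alpha\to\infty$ means all steepness components grow together, so that Assumption~\ref{assump:order} remains in force and none of the excluded center-alignment pathologies is approached, and one must make explicit that the linear-in-$\alpha$ prefactors arising from the chain rule (the $\alpha_{li}$ factors coming from the Lie derivative) are harmless against the exponential decay furnished by Theorems~\ref{thm:SILLconv}, \ref{thm:augSILLconv1}, and~\ref{thm:augSILLconv2}. Once that prefactor argument is recorded, the corollary follows with no additional work, since it is essentially linearity of the (exponential) limit applied to the already-established bilinear convergence results.
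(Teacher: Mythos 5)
Your proposal is correct and follows essentially the same route as the paper, whose entire proof is the one-line observation that the result is a direct consequence of Theorems \ref{thm:SILLconv}, \ref{thm:augSILLconv1}, and \ref{thm:augSILLconv2} applied termwise to the finite sum. Your additional care about the linear-in-$\alpha$ prefactors not defeating the exponential decay is a detail the paper leaves implicit, but it does not change the argument.
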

\begin{proof}
This result is a direct consequence of Theorems \ref{thm:SILLconv}, \ref{thm:augSILLconv1} and \ref{thm:augSILLconv2}.
$\blacksquare$\end{proof}

\begin{corollary}\label{cor:Hrbf}
Under the assumptions of Theorem \ref{thm:SILLconv}, the sum of products
\be
&\sum_{i=1}^{m}\sum_{j=1}^{N_L} \alpha_{li}w_{ij} P(y;\theta_{l})\Lambda(y;\theta_{j})  \\& +\sum_{i=1}^{m}\sum_{k=N_L+1}^{N} \alpha_{li}w_{ik} P(y;\theta_{l})P(y;\theta_{k}) \\
\ee
approaches
\be 
& \sum_{i=1}^{m}\sum_{j=1}^{N_L} \alpha_{li}w_{ij} H(y;\theta_{j},\theta_{l}),  
\ee a weighted sum of conjunctive RBFs, exponentially as $\alpha\rightarrow\infty$.
\end{corollary}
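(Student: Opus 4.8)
\emph{Proof proposal.} The plan is to prove this exactly as its sibling Corollary~\ref{cor:logH}: by a term-by-term application of the pairwise product-approximation theorems, followed by the observation that a finite linear combination of exponentially vanishing errors still vanishes exponentially. The left-hand side is a finite sum (at most $m(N_L+N_R)$ summands) in which every summand is a fixed scalar multiple of either $P(y;\theta_l)\Lambda(y;\theta_j)$ or $P(y;\theta_l)P(y;\theta_k)$, so I would replace each product by its limiting single dictionary function and then control the accumulated error.

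First I would treat the mixed products $P(y;\theta_l)\Lambda(y;\theta_j)$. Using commutativity and relabeling the roles in Theorems~\ref{thm:augSILLconv1}--\ref{thm:augSILLconv2} (their logistic index becomes $j$, their RBF index becomes $l$), split on the relative centers: if $\mu_{li^*}\ge\mu_{ji^*}$ for some coordinate $i^*$, Theorem~\ref{thm:augSILLconv1} gives $\Lambda(y;\theta_j)P(y;\theta_l)-P(y;\theta_l)\to 0$ exponentially and $H(y;\theta_j,\theta_l)=P(y;\theta_l)$; otherwise $\mu_{ji}>\mu_{li}$ in every coordinate, Theorem~\ref{thm:augSILLconv2} gives $\Lambda(y;\theta_j)P(y;\theta_l)\to 0$ exponentially and $H(y;\theta_j,\theta_l)=0$. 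In both cases $P(y;\theta_l)\Lambda(y;\theta_j)-H(y;\theta_j,\theta_l)\to 0$ exponentially. For the pure RBF products, Theorem~\ref{thm:augSILLconv4} gives $P(y;\theta_l)P(y;\theta_k)\to 0$ exponentially, so those summands contribute only exponentially small error and drop out of the limit.

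Subtracting the claimed limit from the left-hand side then leaves $\sum_{i,j}\alpha_{li}w_{ij}\bigl(P(y;\theta_l)\Lambda(y;\theta_j)-H(y;\theta_j,\theta_l)\bigr)+\sum_{i,k}\alpha_{li}w_{ik}\,P(y;\theta_l)P(y;\theta_k)$. Each bracketed difference and each RBF--RBF product is $O(e^{-c\alpha})$ for some $c>0$ by the three theorems above; the weights $w_{ij},w_{ik}$ are $\alpha$-independent constants (they express the components of $F$ in the augSILL span), while the prefactors $\alpha_{li}$ are themselves steepness parameters and grow at worst linearly with $\alpha$, so each summand is $O(\alpha e^{-c\alpha})$, hence $O(e^{-c'\alpha})$ for any $0<c'<c$. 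A finite sum of such terms is bounded by a constant times $e^{-c'\alpha}$ with $c'$ the (shaved) minimum of the individual rates, so the whole difference goes to zero exponentially as $\alpha\to\infty$.

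I expect no real obstacle: the content is entirely contained in Theorems~\ref{thm:augSILLconv1}, \ref{thm:augSILLconv2} and~\ref{thm:augSILLconv4}, and the standing hypotheses inherited from Theorem~\ref{thm:SILLconv} (Assumption~\ref{assump:order} together with $y_i\neq\mu_{ki}$ for the relevant $i,k$) are precisely what those theorems require, so the single excluded pathology---an exact coordinate-wise tie between a logistic center and an RBF center---is already ruled out. The only points worth an explicit sentence are the bookkeeping of the argument order in $H(\cdot;\theta_j,\theta_l)$ versus the $H(\cdot;\theta_l,\theta_k)$ of the theorem statements, and the remark that the $\alpha$-growth of the coefficients $\alpha_{li}$ does not spoil exponential decay. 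This is the direct analog of the one-line argument used for Corollary~\ref{cor:logH}.
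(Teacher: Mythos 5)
Your proposal is correct and follows the same route as the paper, which disposes of the corollary in one line as "a direct consequence of Theorems \ref{thm:augSILLconv1}, \ref{thm:augSILLconv2} and \ref{thm:augSILLconv4}." Your write-up simply makes explicit the bookkeeping the paper leaves implicit (the role swap in the arguments of $H$, the finite-sum accumulation of exponentially small errors, and the harmless polynomial growth of the $\alpha_{li}$ prefactors).
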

\begin{proof}
This result is a direct consequence of Theorems \ref{thm:augSILLconv1}, \ref{thm:augSILLconv2} and \ref{thm:augSILLconv4}.
$\blacksquare$\end{proof}

The resulting linear combinations from Corollaries \ref{cor:logH} and \ref{cor:Hrbf} can be stacked and combined into the matrix $K$ (whose entries would be the products of $\alpha_{**}$ and $w_{**})$.

\begin{table*}[ht]
    \centering
    \begin{tabular}{|p{6mm}|p{8mm}|p{50mm}|p{62mm}|p{28.5mm}|}
    \hline
       \textbf{Fun.} & \textbf{Ref.}  &  \textbf{Approximation} & \textbf{Difference (Error)} & \textbf{Error Bound} \\
    \hline
       Log.  & (\ref{eq:augSILL_limApproxLog})   & \(\begin{aligned} &\sum_{i=1}^{m}\sum_{j=1}^{N_L} \alpha_{li}w_{ij} \Lambda(y;\theta^*)  \\ &+\sum_{i=1}^{m}\sum_{k=N_L+1}^{N} \alpha_{li}w_{ik} H(y;\theta_{l},\theta_{k}) \end{aligned}\) & \(\begin{aligned} &\sum_{i=1}^{m}\sum_{j=1}^{N_L} \alpha_{li}w_{ij} \lambda(y_i;\theta_{li}) \Lambda(y;\theta^*) \\& +\sum_{i=1}^{m}\sum_{k=N_L+1}^{N} \alpha_{li}w_{ik} \lambda(y_i;\theta_{li}) H(y;\theta_{l},\theta_{k}) \end{aligned}\)  & \(\begin{aligned}&\sum_{i=1}^{m}\sum_{j=1}^{N_L}\frac{\nu_{ij}}{2^{m+1}} \\&+ \sum_{i=1}^{m} \sum_{k=N_L+1}^{N} \frac{\nu_{ik}}{2^{3m+1}}\end{aligned}\) \\
    \hline
        Log. & (\ref{eq:augSILL_logPrime})  & \(\begin{aligned}  &\sum_{i=1}^{m}\sum_{j=1}^{N_L} \alpha_{li}w_{ij} \Lambda(y;\theta_l)\Lambda(y;\theta_j)  \\ &+\sum_{i=1}^{m}\sum_{k=N_L+1}^{N} \alpha_{li}w_{ik} \Lambda(y;\theta_{l})P(y;\theta_{k}) 
    \end{aligned}\)  & \(\begin{aligned} 
&\sum_{i=1}^{m}\sum_{j=1}^{N_L} \alpha_{li}w_{ij} \lambda(y_i;\theta_{li})\Lambda(y;\theta_l)\Lambda(y;\theta_j)  \\ &+\sum_{i=1}^{m}\sum_{k=N_L+1}^{N} \alpha_{li}w_{ik} \lambda(y_i;\theta_{li})\Lambda(y;\theta_{l})P(y;\theta_{k})
\end{aligned}\)  & \(\begin{aligned}&\sum_{i=1}^{m}\sum_{j=1}^{N_L}\frac{\nu_{ij}}{2^{2m+1}} \\&+ \sum_{i=1}^{m}\sum_{k=N_L+1}^{N}\frac{\nu_{ik}}{2^{3m+1}}\end{aligned}\) \\
    \hline
        RBF  & (\ref{eq:augSILL_limApproxRbf})  & \(\begin{aligned} &\sum_{i=1}^{m}\sum_{j=1}^{N_L} \alpha_{li}w_{ij} H(y;\theta_{j},\theta_{l})\end{aligned}\) & \(\begin{aligned} \sum_{i=1}^{m}&\sum_{j=1}^{N_L} 2\alpha_{li}w_{ij} \lambda(y_i;\theta_{li}) H(y;\theta_{j},\theta_{l}) \end{aligned}\) & \(\begin{aligned}\sum_{i=1}^{m}\sum_{j=1}^{N_L}\frac{\nu_{ij}}{2^{3m+1}}\end{aligned}\)  \\
    \hline
       RBF   & (\ref{eq:augSILL_rbfPrime})  & \(\begin{aligned} &\sum_{i=1}^{m}\sum_{j=1}^{N_L} \alpha_{li}w_{ij} \Lambda(y;\theta_{j})P(y;\theta_{l}) \\&+\sum_{i=1}^{m}\sum_{k=N_L+1}^{N} \alpha_{li}w_{ik} P(y;\theta_i)P(y;\theta_k)\end{aligned}\) & \(\begin{aligned}
&\sum_{i=1}^{m}\sum_{j=1}^{N_L} \alpha_{li}w_{ij} \lambda(y_i;\theta_{li})\Lambda(y;\theta_{j})P(y;\theta_{l}) \\&+\sum_{i=1}^{m}\sum_{k=N_L+1}^{N} \alpha_{li}w_{ik} \lambda(y_i;\theta_{li})P(y;\theta_i)P(y;\theta_k)
\end{aligned}\)  & \(\begin{aligned}&\sum_{i=1}^{m}\sum_{j=1}^{N_L}\frac{\nu_{ij}}{2^{3m+1}} \\&+ \sum_{i=1}^{m}\sum_{k=N_L+1}^{N}\frac{\nu_{ik}}{2^{4m+1}}\end{aligned}\) \\
    \hline
    \end{tabular}
    \caption{Approximations to and properties of error bounds for the four equations referred to in the \textbf{Ref.}. The reference equation is approximated as the corresponding equation in the \textbf{Approximation} column. We give the error of this approximation in the \textbf{Difference (Error)} column. The \textbf{Error Bound} column gives a bound on this error. The \textbf{Description} column refers to the type of dictionary function approximated in the row. The right side of Fig. \ref{fig:paper_summary} shows where these approximations fit into showing uniform finite approximate closure.}
    \label{tab:linearityErrorAugSILL}
\end{table*}

\subsection{Expectation of Approximation Error Vanishes}\label{sec:augSILL_error}

This section simultaneously addresses the approximation of two related mathematical objects.
\begin{enumerate}
    \item The nonlinear combination of AugSILL dictionaries from Corollaries \ref{cor:logApprox} and \ref{cor:rbfApprox} with linear combinations.  This is the lower left step in the right side of Fig. \ref{fig:paper_summary}.
    \item The linear combinations discussed in Corollaries \ref{cor:logH} and \ref{cor:Hrbf} with the Lie derivatives of conjunctive logistic and RBFs respectively. This is the upper right step in the right side of Fig. \ref{fig:paper_summary}.
\end{enumerate}
In total, there are four distinct approximations, one for each of the referenced corollaries above (see Table \ref{tab:linearityErrorAugSILL}). Each case our approximation is a step closer to the Koopman model. 

To understand our approximation error we compute the expected values of a single dimensional logistic and RBF. We do so with parameters and measurement values sampled from uniform distributions defined on the interval $[-a,a]$.  We choose this statistical model for how our data and parameters are sampled, because  1) the data and parameters are assumed to belong to a bounded continuum, and 2) the uniform distribution is the maximum entropy distribution for a continuous random variable on a finite interval.  Since our error terms are weighted sums of products of these functions we, under the assumption of independence, estimate the expected value of our error terms via the linearity and product rule of expectation. 

We cannot explicitly compute the probability density function (PDF) of our logistic and RBFs, so, we compute the values of these integrals numerically. Intermediate steps and details of this approximation are in Section \ref{sec:AppendixPDF} of the Appendix. In Fig. \ref{fig:expectedVals} we show their calculated expected values and variances for symmetric uniform distributions with different values of $a$. 

We find that the expected value of a logistic function will be $1/2$ (see Fig. \ref{fig:expectedVals}). Its variance, as we sample in a wider interval, tend to the functional extremes of zero and one.  This is favorable for the linearity of our approximation since, for all $\varepsilon\in (0, 0.5]$, $(0.5-\varepsilon)(0.5+\varepsilon) = 0.25 - \varepsilon^2 < 0.25 = (0.5)^2$. So, products of more extreme samples are lower in value than products of samples near the expected value. The expected value of an RBF will be no greater than $1/4$, and it decreases as $a$ increases.

\begin{figure}[ht]
    \includegraphics[width=\linewidth ]{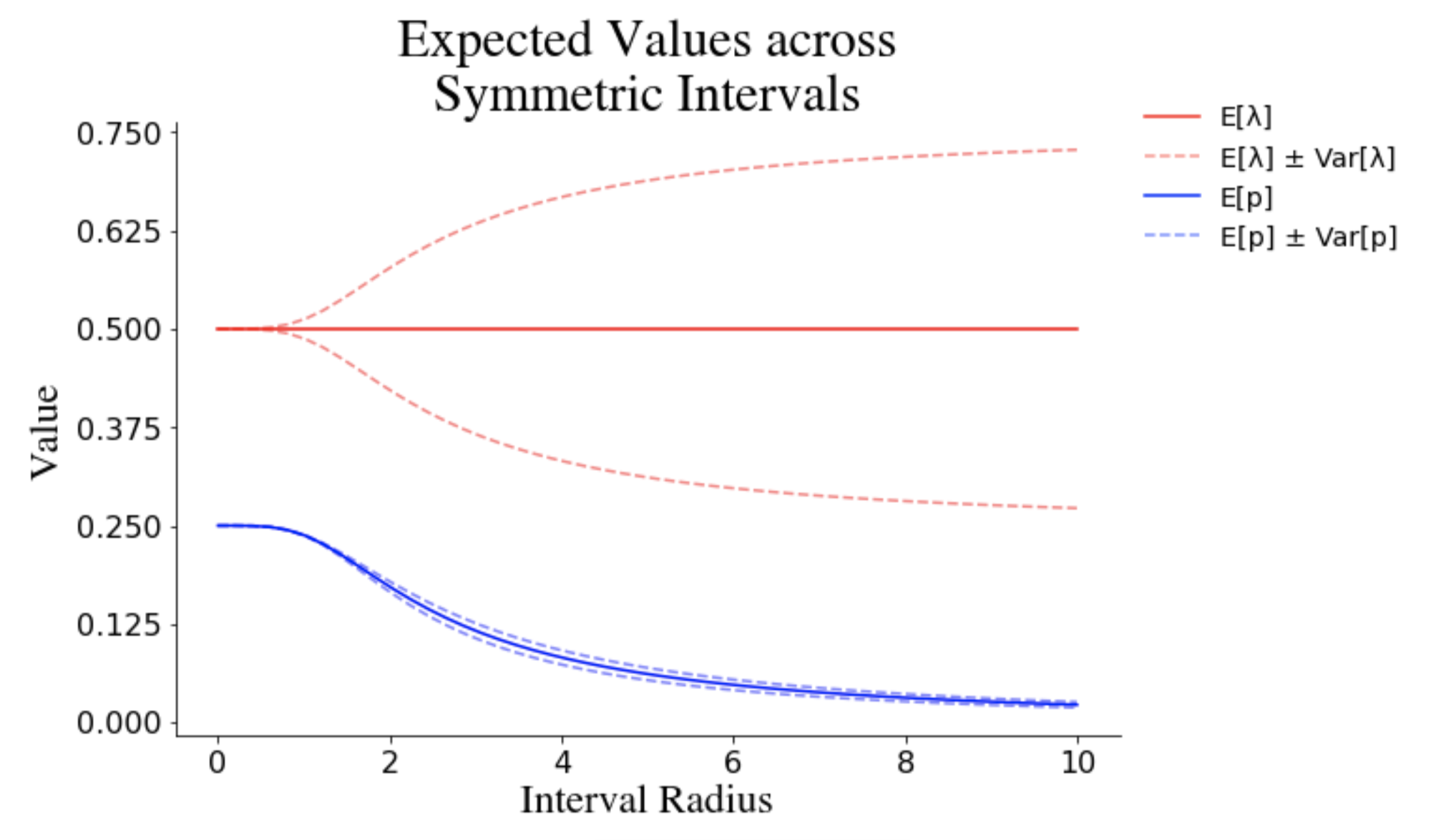}
    \caption{Expected values and variances of logistic and RBFs with parameters and measurement values sampled from symmetric uniform distributions of various interval radii. Note that the expected value of a logistic function is always $1/2$, and that of the RBF is bounded above by $1/4$.}
    \label{fig:expectedVals}
\end{figure}


We approximate a single term in the sum of the error function and extrapolate via the sum and product rule of expectation under the assumption of independence to see how nearly linear our approximation is (see Section \ref{sec:AppendixPDF} of the Appendix). We can conservatively bound the expectation of approximation error as a product that decreases exponentially with the number of measurements.  The error bounds for a conjunctive logistic and RBF are \be\label{eq:CEBlogRBF} E[\Lambda]<\frac{1}{2^m}\mbox{ and } E[P]<\frac{1}{4^m}=\frac{1}{2^{2m}}.\ee Since, $H$ will be a conjunctive RBF in $\frac{1}{2^m}^{th}$ of the measurement-parameter space, its weight of decrease can be bounded by \be\label{eq:CEBH} E[H]<\left(\frac{1}{2^{2m}}\right)\left(\frac{1}{2^m}\right)=\frac{1}{2^{3m}}.\ee
We record the full error terms and bounds in Table \ref{tab:linearityErrorAugSILL}. 


In summary, the error, $\epsilon_l(y)$, is as well or better behaved for augSILL than SILL dictionaries. So, there is a uniform bound on $\epsilon_l$, $B>0$ for augSILL dictionaries, much like there is for a standard SILL dictionary.  This means that augSILL models must satisfy uniform finite approximate closure. 

Furthermore, augSILL dictionaries, in the limit of an increasing number of measurements and increasing steepness of their dictionary functions, have that their bounding constant, $B>0$, approaches zero in expectation, $B\rightarrow 0$. Since the uniform bound on the error of this model can be arbitrarily small, the augSILL dictionaries can be used to build accurate Koopman models for any dynamic system of the from of Eq. (\ref{eq:nonlinear_system}).

\section{Numerical Examples}\label{sec:numerical}
In this article we demonstrate the uniform finite approximate closure of the SILL and augSILL dictionaries.  Models with this subspace invariance property should make accurate multi-step predictions and should generalize to data that the model has not seen before. In this section, we test the performance of our newly constructed homogeneous (SILL) and heterogeneous (augSILL) dictionaries.

To come full circle, we need to reexamine deepDMD and compare it to these models. DeepDMD uses a feedforward neural network to simultaneously parameterize the matrix $K$ as well as the dictionary functions $\psi(y)$. deepDMD has built accurate predictive models and its dimension ($N\in\mathbb{Z}^+$) scales well with that of the modeled system \cite{yeung2019learning}. We build a novel comparison of deepDMD and a much lower-parameter model built from the augSILL basis.  The lower parameter augSILL model learns as quickly and accurately as the deep-learning-based model.

To explain why algorithms like EDMD have variable success, we contribute a head to head comparison of five dictionaries for Koopman learning.  We see the deep-learning-inspired dictionaries vastly outperform orthogonal polynomial dictionaries. This suggests that issues with algorithms like EDMD may be resolved by selecting a dictionary proven to satisfy uniform finite approximate closure.

Unless specified otherwise we use simulated data generated from uniformly distributed initial states run with SciPy's ODE integration software. Each of these results is concerned with the discrete-time, data-driven problem statement. The system's state is directly measured at even time intervals.

\begin{figure*}[ht]
    \centering
    \includegraphics[width=0.7\linewidth ]{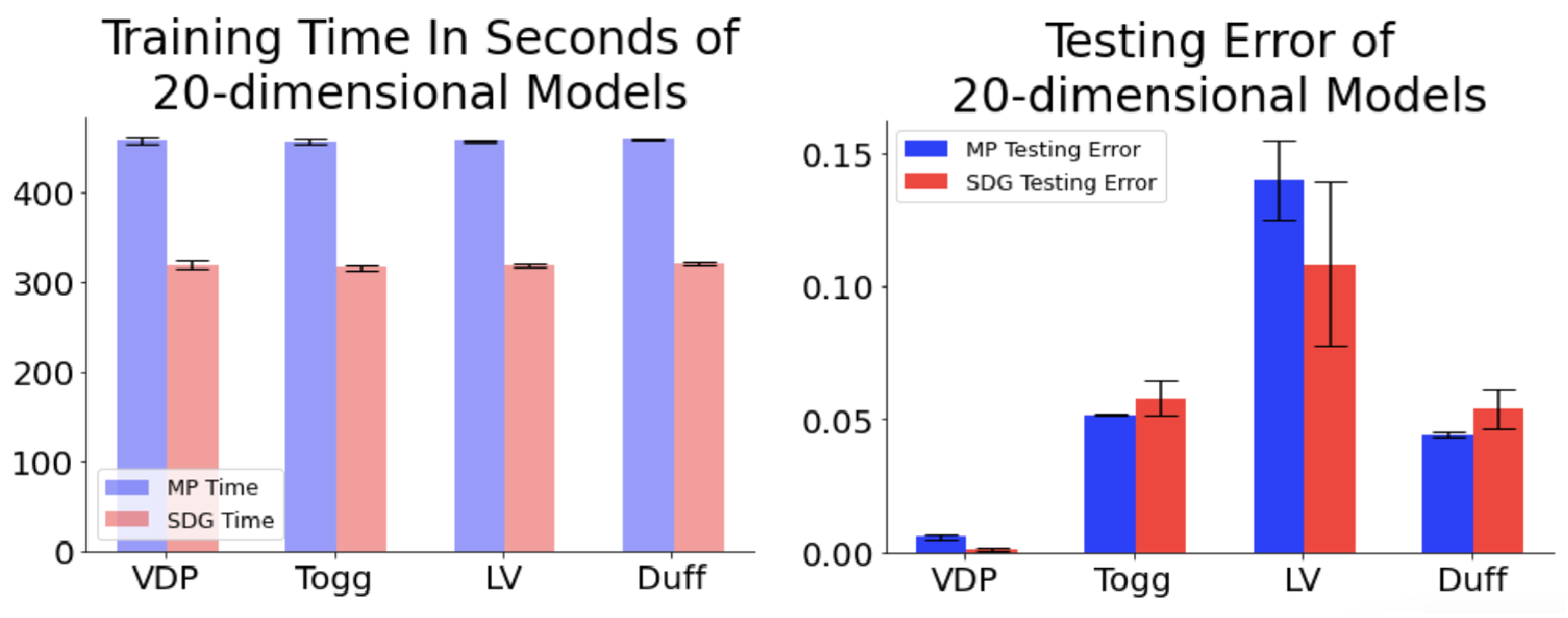}
    \caption{A comparison of the full matching pursuit algorithm to SGD on four nonlinear systems. We plot 5-step prediction error. The SGD algorithm (in red) shows consistently better temporal scaling as well as comparable 5-step prediction error.}
    \label{fig:mp_vs_sgd}
\end{figure*}

\subsection{Choosing center parameters}
Even when a dictionary class and model dimension are selected, each individual problem will warrant a unique parameterization of the dictionary. Given a dictionary, such as the augSILL dictionary, how do we choose the parameters of each dictionary function? We consider two algorithms, matching pursuit and stochastic gradient descent (SGD).


Matching pursuit \cite{mallat1993matching} considers an expansive list of potential dictionary functions and greedily adds the function that lowers the value of the objective function (Eq. (\ref{eq:objective})) most. Matching pursuit adds one function at a time to the model.

We use SGD to attack a host of non-convex optimization problems. It is famous, in part, because of its use in training artificial neural networks. SGD can be directly applied to parameterize a fixed number of dictionary functions from data, much like it learns the parameters of a neural network.

\subsubsection{Which algorithm do we choose?}\label{sec:numerical-comp}
We compared two variations of matching pursuit, as well as SGD for learning augSILL models of four dynamic systems, the Van der Pol oscillator, the Duffing oscillator, the Lokta-Volterra model and the Gardner-Collins toggle switch. The specific parametrization of these systems is given in Section \ref{sec:AppendixSystems} of the Appendix.  The measurements for each system are the state variables themselves.

We found that the full matching pursuit algorithm and SGD had similar performance for a 20 dimensional Koopman operator (see Fig. \ref{fig:mp_vs_sgd}).  
We focus on accuracy and performance for 20 dimensional models as a step-in for modeling higher dimensional systems.  Also, we note that SDG was about 1.5 times faster when building a 20 dimensional model. Since SGD was the better choice for building larger Koopman models in terms of time to execute, and performed comparably in 5-step prediction error, we compare this algorithm to deepDMD. Since deepDMD utilizes SGD, we can compare model accuracy at each training epoch.

\begin{figure}[ht]
    \centering
    \includegraphics[width=\linewidth ]{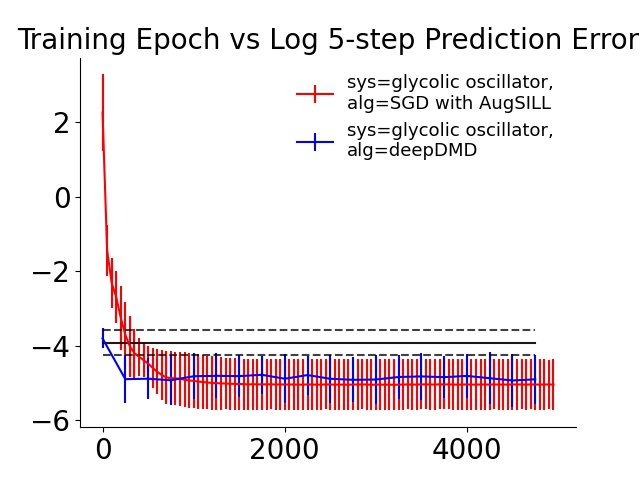}
    \caption{A comparison deepDMD to SGD with the augSILL basis on the seven dimensional model of glycolysis given in \cite{daniels2015efficient}. The solid horizontal bar is the mean performance of the DMD algorithm, the dotted bars are one standard deviation above and below this mean. The spines are error bars for the measured epochs.}
    \label{fig:DeepDMDvsSGD}
\end{figure}

\subsection{AugSILL basis vs deepDMD}
The SILL and augSILL dictionaries are a targeted study of the dictionary functions generated from deepDMD. Our visualization of deepDMD observables showed a convergence to sums of SILL and augSILL dictionary observables. Can we use these dictionaries to build a model on par with deep learning?

To challenge ourselves, we used a seven dimensional glycolysis model to generate our testing and training data \cite{daniels2015efficient}. This data was all generated from a single initial condition $x_0=[1,0.19,0.2,0.1,0.3,0.14,0.05]$.  The augSILL model reaches a comparable 5-step prediction error to deepDMD in under 1000 training epochs and neither significantly changes over the next 4000 epochs, see Fig. \ref{fig:DeepDMDvsSGD}. Note that the model we learn using the augSILL basis has $995$ parameters.  All in all, the deepDMD model has a total of $3,949$ parameters. All of the augSILL parameters are easily interpreted as center, steepness and weight parameters of a logistic or RBF.

\begin{figure*}[ht]
    \centering
    \includegraphics[width=.85\linewidth ]{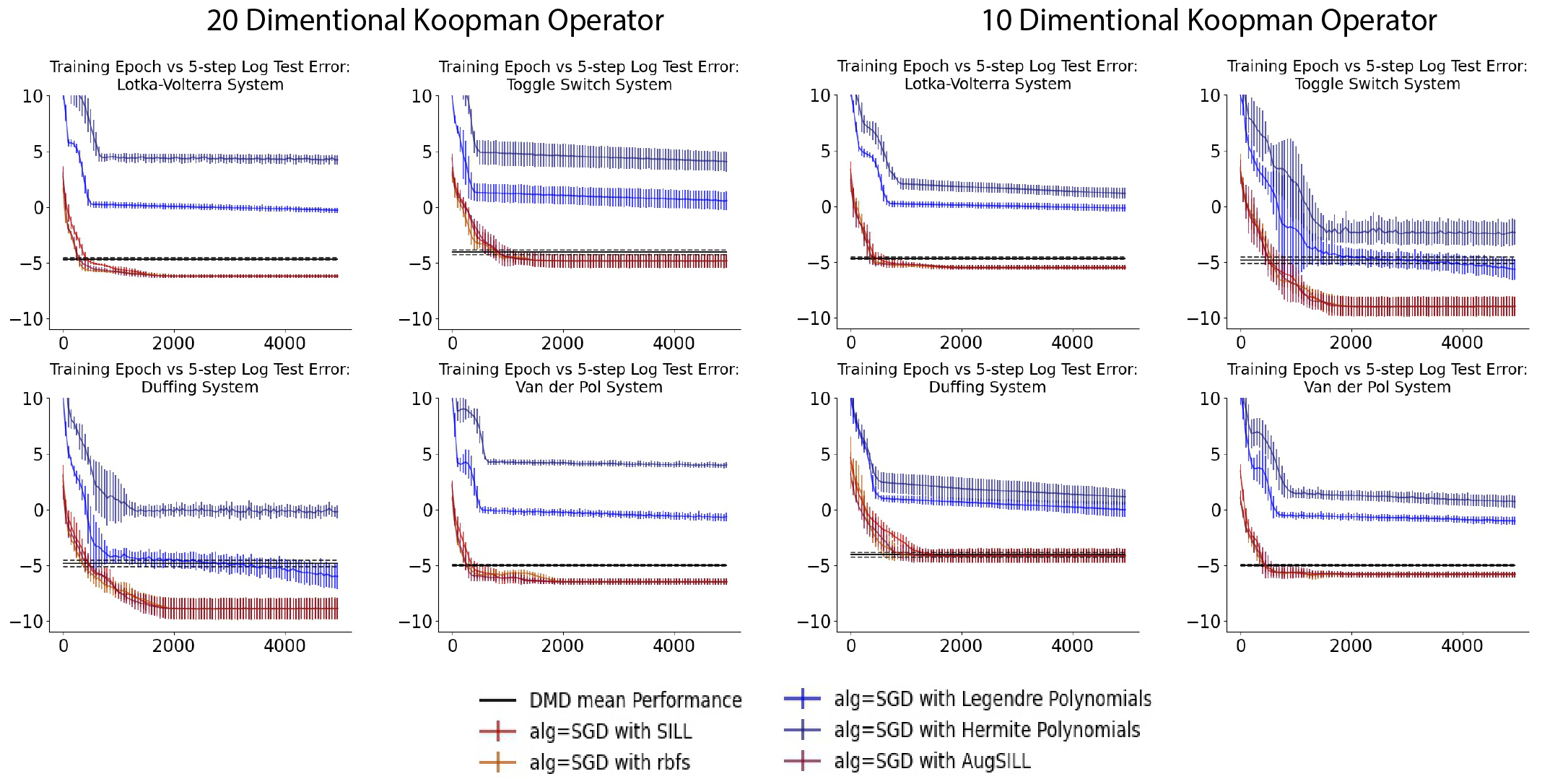}
    \caption{Five-step prediction accuracy vs training epoch of Koopman learning with SGD using various dictionaries. Plotted on a log scale. Note that the Hermite and Legendre polynomial basis (in blue) have greater error throughout the training process. The spines are error bars at each epoch where 5-step error was measured.} 
    \label{fig:algComp}
\end{figure*}

\subsection{Comparison of dictionaries for Koopman learning}
Now we address the relationship between the choice of dictionary and the success of a Koopman model.  What properties do successful Koopman dictionaries have in common?  The augSILL basis compared in performance to deepDMD.  Would we have gotten similar results using other dictionaries?

We learn the systems in Eq. (\ref{eq:vanDerPol}), (\ref{eq:duff}), (\ref{eq:pred}), and (\ref{eq:togg}), parameterized with the SGD algorithm. We do so with the SILL, augSILL and summed one-dimensional RBFs, as well as two different orthogonal polynomial dictionaries (Legendre and Hermite). We compare their 5-step average prediction error for a 5, 10 and 20 dimensional Koopman model of each system. 

For the 5 dimensional models the choice of dictionary seemed mostly irrelevant. However, for a 5 dimensional model, SGD only outperformed standard DMD for the toggle switch. So, we don't have enough system dimensions with these dictionaries to want to use SGD in the first place. 

Building a 10 and 20 dimensional Koopman model of each system, we find that the basis inspired from the outputs of deepDMD (SILL, augSILL and summed RBFs) have lower 5-step prediction errors (see Fig. \ref{fig:algComp}).  Each of these three basis had nearly identical errors, though the SILL basis can take more training iterations to perform comparably. The Legendre polynomial basis outperformed the Hermite polynomial basis in each case.  For the 20 dimensional model of the Van der Pol system the augSILL basis was over 372 times more accurate than the Legendre polynomial basis, for the Duffing Oscillator it was over 221 times more accurate, for the Predator-Prey System it was over 18 times as accurate and for the Toggle Switch it was over 328 times as accurate.  


%
%
\section{Conclusion} 
Learning models in a data-driven setting using human-defined dictionaries results in high-dimensional, over-parameterized representations of what could be simple physical phenomena. DeepDMD and other ANN learning techniques address these issues but at the cost of extensive computational time. Moreover, the dictionaries learned by deepDMD are adhoc and randomly constructed, heterogeneous in nature, and difficult to interpret.  To date, no systematic method has explored using mixed or heterogeneous dictionaries to improve performance in Koopman learning problems. 

To investigate how deepDMD finds successful Koopman invariant subspaces, we extracted simple features of the dictionary functions learned by deepDMD and looked for properties to explain their success.  We discovered a mixed Koopman dictionary which we call the augSILL dictionary. We showed, in Section \ref{sec:AugSILLclosure}, that augSILL model error drops exponentially when steepness parameters increase and more measurements are included in the data.  Quantifying how much should be measured to build an accurate Koopman model is an intriguing future research direction.

The success of any measurement-inclusive Koopman dictionary depends on more than how the dictionary approximates the evolution of the measurements in time. It also depends on how well linear combinations of the dictionary functions approximate the dictionary function's Lie derivatives. In under 1000 training epochs augSILL models matched the accuracy of trained deepDMD models. These models were anywhere from 18 to 372 times as accurate as models made using the polynomial dictionaries, which have no guarantees of uniform finite approximate closure. The augSILL dictionary performs like deepDMD to learn a dynamic system from data using an order of magnitude fewer parameters.  Further, the augSILL dictionary is fully specified with closed-form analytical expressions (unlike deepDMD) and as a consequence of the theoretical results in this paper, satisfies a unique numerical property of uniform approximate finite closure.  Our methodology provides a template for understanding how deep neural networks successfully approximate governing equations \cite{brunton2016discovering}, the action of operators and their spectra \cite{mezic2005spectral}, and dynamical systems \cite{mezic2013analysis}.  Further, these results provide a pattern for improving scalability and interpretability of dictionary-based learning models for dynamical system identification. 

\begin{ack}
Any opinions, findings and conclusions or recommendations expressed in this material are
those of the author(s) and do not necessarily reflect the
views of the Defense Advanced Research Projects Agency
(DARPA), the Department of Defense, or the United States
Government. This work was supported partially by a Defense Advanced Research Projects Agency (DARPA) Grant
No. FA8750-19-2-0502, PNNL Grant No. 528678, ICB Grant No. W911NF-19-D-0001 and No. W911NF-19-F-0037, and ROE Young Investigator Grant No. W911NF-20-1-0165.
\end{ack}

\bibliographystyle{plain}        
\bibliography{bibliography}           

\appendix
\section{Notation}\label{sec:AppendixNotation}
Our models, throughout this article, will have two classes of parameters. Each class refers to distinct geometric properties. To create a clearer separation of function variables and parameters we will use the following notation \[\Eta(x;y_l,z_l).\]  In this notation, $\Eta$ is a function whose variable is the vector $x$ and whose parameters are vectors $y_l$ and $z_l$. The vectors $y_l$ and $z_l$ refer to the geometrically distinct classes of parameters.  A second example would be \[\eta(x_i;y_{li}, z_{li}).\] In this notation, $\eta$ is a function whose variable is the scalar $x_i$, and whose parameters are the scalars $y_{li}$ and $z_{li}$.  To make our equations less cumbersome we summarize all the distinct parameters with the label $\theta$. For example, \[\Eta(x;y_l,z_l)\triangleq \Eta(x;\theta_l),\] and \[\eta(x_i;y_{li}, z_{li})\triangleq \eta(x_i;\theta_{li}).\]

In general, our notation uses the following conventions. \begin{enumerate}
    \item Integer $i$ will be an index of measurement dimension.
    \item Integer $j$ will be an index of the first group of added dimensions.
    \item Integer $k$ will be an index of the second group of added dimensions.
    \item Integer $l$ will be an arbitrary added dimension index.
    \item Integer $n$ will be the state dimension.
    \item Integer $m$ will be the number of  measurements.
    \item Integer $N$ will be the added dimensions.  When we mix two basis, the dimension of the first (conjunctive logstic functions) will be $N_L$ and the dimension of the second (conjunctive radial basis functions) will be $N_R$.  This means that $N_L + N_R = N$.
    \item The $m\times N$ real-valued matrix, $w$, will be a matrix of weights. In our analysis it corresponds to a block of the Koopman Operator approximation matrix, $K$, which describes the flow of the state variables as a linear combination of nonlinear observables.
\end{enumerate}

\section{Average SILL Error Induced by Linearity of the Model and Final Results on Subspace Invariance of SILL}\label{sec:averageErrorSILL}

To show uniform finite approximate closure, we need to characterize the error, \be\label{eq:overallError} \epsilon(y) = \frac{d\psi(y)}{dt} - K\psi(y).\ee  We do so by choosing a specific approximation for the time derivative of each nonlinear dictionary function in the SILL basis.

Characterizing the closure of a dictionary means understanding the error between the Koopman model built with that dictionary and the true Lie derivative of each dictionary function. In Section \ref{sec:SILLclosure} we used mathematical analysis to show that the approximation error of some models will go to zero in the limit of high steepness. These models do not fully bridge the gap between our Koopman model and the true Lie derivative. This section characterizes additional approximation errors to fully connect our Koopman model to the true Lie derivative. Then, we uniformly bound those errors to show the uniform finite approximate closure of the SILL dictionary.

The error of approximating Eq. (\ref{eq:SILL_nonlinComb}) with Eq. (\ref{eq:SILL_finalApprox}) is: 
\be\label{eq:SILL_lin_error}
 \sum_{i=1}^{m}\sum_{j=1}^{N}\alpha_{li}w_{ij}\lambda(y_i;\theta_{li})\Lambda(y;\theta^*), 
\ee  and the error of approximating Eq. (\ref{eq:SILL_lie_derivative}) with Eq. (\ref{eq:SILL_linearLie}) is: \be\label{eq:SILL_linProd_error}
 \sum_{i=1}^{m}\sum_{j=1}^{N}\alpha_{li}w_{ij}\lambda(y_i;\theta_{li})\Lambda(y;\theta_l)\Lambda(y;\theta_j).
\ee

To get a grasp on the kind of errors we can expect when modeling with the SILL basis we need to get a grasp on the sizes of Eq. (\ref{eq:SILL_lin_error}) and Eq. (\ref{eq:SILL_linProd_error}).

We do so by sampling possible parameter and state values from uniform distributions over a symmetric interval and then computing distributions of one dimensional logistic functions.  We then do the same for terms in the sums in Equations (\ref{eq:SILL_lin_error}) and  (\ref{eq:SILL_linProd_error}), under the assumption that $w_{ij}=1$. In doing so, we notice an exponential decrease in expected error as the number of measurements increases. Details are given in Section \ref{sec:augSILL_error}.  But, the end result is that one can conservatively expect the error of each summation term in Eq. (\ref{eq:SILL_lin_error}) to decrease in the numbers of measurements, $m$, at a rate bound above by $\frac{1}{2^{m+1}}$ and  the error of each summation term in Eq. (\ref{eq:SILL_linProd_error}) to decrease in $m$ at a rate bound above by $\frac{1}{2^{2m+1}}$.

Explicitly, these error bounds are \be\label{eq:SILLerrorBounds} \sum_{i=1}^{m}\sum_{j=1}^{N}\frac{\nu_{ij}}{2^{m+1}} \mbox{ and }\sum_{i=1}^{m}\sum_{j=1}^{N}\frac{\nu_{ij}}{2^{2m+1}}, \ee where $\nu_{ij}\in\R$.

\subsection{Error of the final model}\label{sec:VC}
Theorem \ref{thm:SILLconv} shows that in the limit of infinitely steep dictionary functions, the error of approximating the product of two conjunctive logistic functions with a single one goes to zero exponentially.  Likewise, our probabilistic bound demonstrated that the error in approximating the Lie derivative of SILL functions as a Koopman model goes to zero exponentially, on average, as the number of measurements increases.  This means that the error, $\epsilon_l(y)$, of the approximation 
\be \label{eq:SILLapproximation}
\dot{\Lambda}(y;\theta_l)& =
\sum_{i=1}^{m}\sum_{j=1}^{N}\alpha_{li}w_{ij}(1-\lambda(y_i;\theta_{li}))\Lambda(y;\theta_l)\Lambda(y;\theta_j)\\& \approx \sum_{i=1}^{m}\sum_{j=1}^{N}\alpha_{li}w_{ij}\Lambda(y;\theta^*)
\ee exponentially go to zero (and therefore will be uniformly bounded by a constant $B>0$) with increasingly steep logistic functions and an increasingly large number of measurements.  As an aside, these error bounds can be used as a guiding upper bound for the number of observables to measure to build a good Koopman model.

So the bound, $B$, on our closure error, $\epsilon(y)$, goes to zero, $B\rightarrow 0$ at the extremes of large measurements and steep dictionary functions. However, in practice, one captures strong nonlinear system features using the SILL basis for a low dimensional dictionary and a few measurements \cite{johnson2018class}.

\section{Proofs of Theorems}\label{sec:AppendixProofs}

In \cite{johnson2018class} we demonstrated that imposing a total order on conjunctive logistic basis functions implied some basic closure properties of the SILL dictionary.  We show in this section proofs that apply this total order as steps to showing the uniform finite closure of the SILL and augSILL dictionaries. The needed total order is summed up in the following assumption.
\begin{assumption}\label{assump:order}
There exists a total order on the set of conjunctive logistic functions, $\Lambda(y;\theta_1) ,..., \Lambda(y;\theta_l)$, induced by the positive orthant $\mathbb{R}^m_+$, where $\theta_{l} \gtrsim \theta_{j}$ whenever $\mu_j - \mu_l \in \mathbb{R}^m_+$, and therefore $\Lambda(y;\theta_l) \geq \Lambda(y;\theta_j)$. 
\end{assumption}  Without loss of generality, we choose to label our parameters to impose that $\theta_l \lesssim \theta_j$ whenever $l \leq j$.

In these results we do not consider the error of approximating the vector field, $F$, we instead focus on the subspace invariance of a model built from a dictionary that already captures the basic system dynamics. This assumption is given formally below. 

\begin{assumption}
The vector field, $F$, lies in the span of our SILL (or alternativly, augSILL) dictionary.
\end{assumption}

\textit{Theorem} \ref{thm:SILLconv}:
Under Assumption \ref{assump:order}, if the dictionary functions do not exactly match their corresponding center parameters, $y_i\neq\mu_{ji} $ for all $ i\in \{1, 2, ..., m\}$ and  $j\in\{1,2,...,N\}$, then, as the steepness parameters go to infinity, the product of two conjunctive logistic function will exponentially approach a single conjunctive logistic function in the dictionary, $\alpha\rightarrow\infty$, \[\Lambda(x;\theta_l)  \Lambda(x;\theta_j) - \Lambda(x;\theta^*)\rightarrow 0\] exponentially.

\begin{proof}
We now investigate the error of the approximation. Eq. (\ref{eq:approx1})'s error of approximation is 
\begin{equation}\label{Lambda_errorterm}
\begin{aligned}
\Lambda(y&;\theta_l)  \Lambda(y;\theta_j) - \Lambda(y;\theta_l) \\ 
&=\Lambda(y;\theta_l) \left(\Lambda(y;\theta_j) -1\right)\\ 
&=\frac{1-\Lambda(y;\theta_j)^{-1}}{(\Lambda(y;\theta_l)\Lambda(y;\theta_j))^{-1}} \\
&= \frac{1 -  (1+e^{-\alpha_{l1}(y_1 - \mu_{l1})})   ... (1+e^{-\alpha_{lm}(y_m - \mu_{lm})}) }{\prod_{i=1}^{m}(1+e^{-\alpha_{li}(y_i - \mu_{li})}) (1+e^{-\alpha_{ji}(y_i - \mu_{ji})})},
\end{aligned}
\end{equation} whenever $\theta_l\gtrsim\theta_j.$

Without loss of generality, assume that $\theta_l\gtrsim\theta_j$. As we hold $y$ constant and let $\alpha\rightarrow \infty$,  for all $ i\in \{1, 2, ..., m\}$, and any possible value of $l,j$ we observe two cases. In these cases we denote a possible value of $\mu_{li}$, $\mu_{ji}$ etc. as $\mu_*$

\textbf{Case 1}, $y_i-\mu_* > 0$: As $\alpha \rightarrow \infty$ we have that $e^{-\alpha(y_i-\mu_*)}\rightarrow 0$ and so $\frac{1}{1+ e^{-\alpha_i(y_i-\mu_*)}} \rightarrow 1$. 

\textbf{Case 2}, $y_i-\mu_* < 0$: As $\alpha \rightarrow \infty$ we have that $e^{-\alpha_i(y_i-\mu_*)}\rightarrow \infty$ and so $\frac{1}{1+ e^{-\alpha_i(y_i-\mu_*)}} \rightarrow 0$ exponentially.  

So, if there exists $i\in \{1,2,...,m\}$ for every $ j$,  so that $y_i-\mu_{ji} < 0$, then Eq. (\ref{eq:approx1}) goes to 0 exponentially as $\alpha \rightarrow \infty$.  

However, if $y_i-\mu_{ji} > 0$ for all $i$, then, since $\theta_l \gtrsim \theta_j$, for all $i$ we have that $y_i-\mu_{li} > 0$ for all $i$ as well, so Eq. (\ref{Lambda_errorterm}) goes to $\frac{1-1}{1}=0$ exponentially as $\alpha \rightarrow \infty$.  
$\blacksquare$\end{proof}

\textit{Corollary}
\ref{cor:logSILLApprox}:
Under the assumptions of Theorem \ref{thm:SILLconv}, when $F$ is spanned by a SILL dictionary, the Lie derivative of a conjunctive logistic function exponentially approaches a finite weighted sum of conjunctive logistic functions as the steepnesses of the functions goes to infinity. Specifically, \be\dot\Lambda(x;\theta_l)\rightarrow \sum_{i=1}^{n}\sum_{j=1}^{N}\alpha_{li}w_{ij}(1-\lambda(x_i;\theta_{li}))\Lambda(x;\theta^*)\ee exponentially as $\alpha\rightarrow\infty$.

\begin{proof} 
We assume that $F$ is spanned by our set of nonlinear dictionary functions.

This means that there exists a real-valued weighting matrix, $w\in \R^{m\times N}$, so that for any, $i\in\{1,2,...,m\}$, the $i^{th}$ element of $F$, $F_i$, can be written as:
\begin{equation}\label{eq:f_regression}
F_i(y) = \sum_{j=1}^{N} w_{ij}  \Lambda(y, \theta_j).
\end{equation}

Therefore, the time derivative of an arbitrary nonlinear observable in the SILL dictionary is: 

\begin{equation}\label{eq:SILL_lie_derivative} 
    \begin{aligned}
    \dot \Lambda(y;&\theta_l) = (\nabla_y\Lambda(y;\theta_l))^T\frac{dy}{dt} = (\nabla_y\Lambda(y;\theta_l))^TF(y)\\
    &=\sum_{i=1}^m\alpha_{li}(1-\lambda(y_i;\theta_{li}))\Lambda(y;\theta_l)F_i(y)\\
    &=\sum_{i=1}^m\alpha_{li}(1-\lambda(y_i;\theta_{li}))\Lambda(y;\theta_l)\sum_{j=1}^{N} w_{ij}  \Lambda(y; \theta_j)\\
    &=\sum_{i=1}^{m}\sum_{j=1}^{N}\alpha_{li}w_{ij}(1-\lambda(y_i;\theta_{li}))\Lambda(y;\theta_l)\Lambda(y;\theta_j).
    \end{aligned}
\end{equation}

Theorem \ref{thm:SILLconv} shows, that in the limit of steepness, the error between Eq. (\ref{eq:SILL_lie_derivative}) and \be\label{eq:SILL_nonlinComb}\sum_{i=1}^{m}\sum_{j=1}^{N}\alpha_{li}w_{ij}(1-\lambda(y_i;\theta_{li}))\Lambda(y;\theta^*) \ee will go to zero exponentially. $\blacksquare$\end{proof}

 \textit{Theorem} \ref{thm:augSILLconv1}:
If $y_i\neq\mu_{ki} $ for all $ i\in \{1, 2, ..., m\}$ and $k\in\{N_L+1,N_L+2,...,N\}$  and there exists $i^*\in\{1, 2, ..., m\}$ so that $\mu_{ki^*} \geq \mu_{li^*}$, then as $\alpha\rightarrow\infty$, $\Lambda(y;\theta_l) P(y;\theta_k) -  H(y;\theta_l,\theta_k)\rightarrow 0$ exponentially.

\begin{proof}
As a preliminary, we note that $-1\leq (\Lambda(y;\theta_l) - 1)\leq 0$ for any $y\in\R^m$ and for $\alpha>0$.  

We first show that when $\mu_{ki} \geq \mu_{li}$ $P(y;\theta_k)\rightarrow 0$ as $\alpha\rightarrow \infty$. We do so by considering the two possible cases for any $i$ where $\mu_{ki} \geq \mu_{li}$. 

\textbf{Case 1}, $y_i>\mu_{ki}$.  In this case as $\alpha\rightarrow\infty$ we have that $e^{-\alpha_{ki}(y_i-\mu_{ki})}\rightarrow 0$ exponentially, and that $(1+e^{-\alpha_{ki}(y_i-\mu_{ki})})^2 \rightarrow 1^2=1$. Thus the $i^{th}$ term of $P(y;\theta_k)$ will go to  $\frac{0}{1}=0$ exponentially.

\textbf{Case 2}, $y_i<\mu_{ki}$.  In this case as $\alpha\rightarrow\infty$  we have that $\frac{e^{-\alpha_{ki}(y_i-\mu_{ki})}}{(1+e^{-\alpha_{ki}(y_i-\mu_{ki})})^2}\rightarrow 0$. Thus the $i^{th}$ term of $P(y;\theta_k)$ will go to $0$ exponentially.

Each of these cases implies that $P(y;\theta_k)$ and therefore Eq. (\ref{eq:P_errorterm1}) goes to $0$ exponentially.
$\blacksquare$\end{proof}

\textit{Theorem} \ref{thm:augSILLconv2}:
If $y_i\neq\mu_{ki}$ and $y_i\neq\mu_{li} $ for all $ i\in \{1, 2, ..., m\}$ and for all $k\in\{N_L+1, N_L+2, ..., N\}$ so that $\mu_{ki} < \mu_{li}$, then as $\alpha\rightarrow\infty$, $\Lambda(y;\theta_l) P(y;\theta_k) -  H(y;\theta_l, \theta_k)\rightarrow 0$ exponentially.

\begin{proof}
We assume that $\mu_{ki} < \mu_{li}$ and note that, in each case as $\alpha\rightarrow \infty$ the $i^{th}$ term in Eq. (\ref{eq:P_errorterm2}) goes to zero as shown in Case 2 of the proof of Theorem \ref{thm:augSILLconv1}.  Thus the product of these terms will go to zero as $\alpha\rightarrow\infty$.  For the sake of brevity we forgo the explicit computation of the limits which follow and speak in more general terms of rate of growth. We note that when we use ``$\infty$'' we mean to say that the term grows to infinity with $\alpha$ with a rate of $e^{c\alpha}$, for some constant $c$.  Furthermore, we use ``$0_*$'' to mean that the term goes to zero with a rate of $e^{-c\alpha}$, for some constant $c$.

\textbf{Case 1}, $y_i > \mu_{ki} $ \textit{Sub-Case 1.1}, $y_i > \mu_{li}$ so as $\alpha\rightarrow\infty$ our $i^{th}$ term goes to $\frac{0_*}{1}=0$. 
\textit{Sub-Case 1.2}, $y_i = \mu_{li}$ so as $\alpha\rightarrow\infty$ our $i^{th}$ term goes to $\frac{0_*}{2}=0$. 
\textit{Sub-Case 1.3},  $y_i < \mu_{li}$ so as $\alpha\rightarrow\infty$ our $i^{th}$ term goes to $\frac{0_*}{\infty}=0$.

\textbf{Case 2}, $y_i < \mu_{ki}$, thus $x_i<\mu_{li}$ and so as $\alpha\rightarrow\infty$ our $i^{th}$ term goes to  $\frac{\infty}{\infty^2\infty}=0$.

\textit{Sub-Case 2.1}, $y_j > \mu_{li}$ so as $\alpha\rightarrow\infty$ our $i^{th}$ term goes to $\frac{\infty}{\infty^2}=0$. 
\textit{Sub-Case 2.2}, $y_i = \mu_{li}$ so as $\alpha\rightarrow\infty$ our $i^{th}$ term goes to $\frac{\infty}{2\infty^2}=0$. \textit{Sub-Case 2.3},  $y_i < \mu_{li}$ so as $\alpha\rightarrow\infty$ our $i^{th}$ term goes to $\frac{\infty}{\infty^2\infty}=0$.

Now we note that, in each case as $\alpha\rightarrow \infty$ the $i^{th}$ term in Eq. (\ref{eq:P_errorterm2}) goes to zero.  Thus the product of these terms will go to zero as $\alpha\rightarrow\infty$.
$\blacksquare$\end{proof}

\textit{Lemma} 1:
If $||\mu_l-\mu_k||_\infty\neq 0$ or $\mu_{li}\neq \mu_{ki}$ for all $i\in\{1,2,...,m\}$ and all $k\in\{N_L+1,N_L+2,...,N\}$, then as $\alpha\rightarrow\infty$, $P(y;\theta_l)P(y;\theta_k)\rightarrow 0$ exponentially.

\begin{proof}
As we consider the approximation error of a product of two conjunctive RBFs with zero our error is the term: \begin{equation}\label{eq:PP_errorterm}
\begin{aligned}
P(y;&\theta_l)P(y;\theta_k) = \\&\prod_{i=1}^m\frac{ e^{-\alpha_{li}(y_i-\mu_{li})} e^{-\alpha_{ki}(y_i-\mu_{ki})}}{ (1+e^{-\alpha_{li}(y_i-\mu_{li})})^2(1+e^{-\alpha_{ki}(y_i-\mu_{ki})})^2 }.
\end{aligned}
\end{equation}

We proceed by looking at the $i^{th}$ term in  Eq. (\ref{eq:PP_errorterm}), we know how it develops as $\alpha\rightarrow\infty$ by examining the cases tabulated below.  As in the proof of Theorem \ref{thm:augSILLconv2} we note that when we use ``$\infty$'' we mean to say that the term grows to infinity with $\alpha$ with a rate of $e^{c\alpha}$, for some constant $c$. Furthermore, we use ``$0_*$'' to mean that the term goes to zero with a rate of $e^{-c\alpha}$ for some constant $c$.
\begin{center}
  \begin{tabular}{| l | c | c | c |}
  \hline
    \textbf{Cases} &$y_i<\mu_{li}$ & $y_i=\mu_{li}$ & $y_i>\mu_{li}$ \\ \hline
    $x_i<\mu_{ki}$ & $\frac{\infty\infty}{\infty^2\infty^2}\rightarrow 0$ & $\frac{\infty}{\infty^2}\rightarrow 0$ & $\beta_1$ \\ \hline
    $y_i=\mu_{ki}$ & $\frac{\infty}{\infty^2}\rightarrow 0$ & $\frac{1}{16}$ & $\frac{0_*}{2}\rightarrow 0$ \\
    \hline
    $y_i>\mu_{ki}$ & $\beta_2$&$\frac{0_*}{2}\rightarrow 0$&$\frac{0_*0_*}{1}\rightarrow 0$\\ \hline
  \end{tabular}
\end{center}

We first comment that the case in the center will occur in at most $m-1$ of the terms of Eq. (\ref{eq:PP_errorterm}) by our assumption that $||\mu_l-\mu_k||_\infty\neq 0$.  We then note that given $\beta_1$ we have three cases: 

\textbf{Case 1}, $-\alpha_{li}(y_i-\mu_{li})-\alpha_{ki}(y_i-\mu_{ki}) < 0$.  In this case we have that as $\alpha\rightarrow\infty$ that our term goes to $\frac{0_*}{\infty^2} \rightarrow 0.$

\textbf{Case 2}, $-\alpha_{li}(y_i-\mu_{li})-\alpha_{ki}(y_i-\mu_{ki}) = 0$. In this case we have that as $\alpha\rightarrow\infty$ that our term goes to $\frac{1}{\infty^2} \rightarrow 0.$

\textbf{Case 3}, $-\alpha_{li}(y_i-\mu_{li})-\alpha_{ki}(y_i-\mu_{ki}) > 0$. In this case we have that as $\alpha\rightarrow\infty$ that our term goes to $\frac{\infty}{\infty^2} \rightarrow 0.$

We now note that without loss of generality that these three cases cover $\beta_2$ (one simply swaps $\alpha$ terms) and so we have that as $\alpha\rightarrow \infty$ at least one term in our product will go to zero exponentially and the rest at most will go to $\frac{1}{16}$.  Thus the error term goes to zero exponentially.
$\blacksquare$\end{proof}

\textit{Theorem} \ref{thm:augSILLconv4}:
If $y_i\neq \mu_{ki}$ for all $i\in\{1,2,...,m\}$ and $k\in\{N_L+1,N_L+2,...,N\}$, then as $\alpha\rightarrow\infty$, $P(y;\theta_l)P(y;\theta_k)\rightarrow 0$ exponentially.

\begin{proof}
This proof is very similar to the proof of Lemma 1. The only difference is that from the table of cases, only the four corner cases can occur.
$\blacksquare$\end{proof}

Assume that our augSILL observables span the vector field we seek to model. Define $F_i(y)$ to be the $i^{th}$ entry of the vector field, $F$. Then $F_i(y)$ may be written as the following weighted sum (for nonnegative integers, $N_L$ and $N_R$ so that $N_L+N_R=N$):

\[F_i(y) = \sum_{j=1}^{N_L}w_{ij}\Lambda(y;\theta_j) + \sum_{k=N_L+1}^{N} w_{ik} P(y;\theta_k).\] 

It is reasonable to assume that $F_i(y)$ can be written as such as sum as we assume $f$ and $y$ to be analytic, which implies that $F$ is smooth. This means that a sufficiently rich basis of logistic and RBFs will approximate each $F_i$ in such a manner. In the real Koopman learning problem  we would also have a constant and weighted sum of the measurements to help approximate each $F_i$. Using these extra terms to approximate $F_i$ in our analysis makes the math more cumbersome, and so we represent $F_i$ without them. The class of representable vector fields, $F$s, is extremely broad as both logistic and RBFs are universal function approximators \cite{barron1993universal}, \cite{buhmann2003radial}.

\textit{Corollary}
\ref{cor:logApprox}:
Under the assumptions of Theorem \ref{thm:SILLconv}, the Lie derivative of a conjunctive logistic function exponentially approaches a nonlinear combination of augSILL functions, specifically,
\be
\dot{\Lambda}(y&;\theta_l) \rightarrow \sum_{i=1}^{m}\sum_{j=1}^{N_L} \alpha_{li}w_{ij}(1 - \lambda(y_i;\theta_{li})) \Lambda(y;\theta^*)  \\& +\sum_{i=1}^{m}\sum_{k=N_L+1}^{N} \alpha_{li}w_{ik}(1 - \lambda(y_i;\theta_{li})) H(y;\theta_{l},\theta_{k})
\ee exponentially as $\alpha\rightarrow\infty$.

\begin{proof}
The derivative of a conjunctive logistic function, $\Lambda(y;\theta_l)$, with respect to time is 
\be
\dot{\Lambda}(y;\theta_l) = \left(\nabla_y\Lambda(y;\theta_l)\right)^T\frac{dy}{dt} = \left(\nabla_y \Lambda(y;\theta_l)\right)^T F(y), 
\ee
where the $i^{\text{th}}$ term of the gradient of $\Lambda(y;\theta_l)$ is 
\be
\left[\nabla_y \Lambda(y;\theta_l) \right]_i &=\alpha_{li}(\lambda(y_i;\theta_{li}) - \lambda(y_i;\theta_{li})^2) \frac{\Lambda(y;\theta_l)}{\lambda(y_i;\theta_{li})} \\
&= \alpha_{li}(1 - \lambda(y_i;\theta_{li})) \Lambda(y;\theta_l).
\ee

Thus the time-derivative of $\Lambda(y_i;\theta_l)$ is
\be\label{eq:augSILL_logPrime}
\dot{\Lambda}&(y;\theta_l) = \sum_{i=1}^{m} \alpha_{li}(1 - \lambda(y_i;\theta_{li})) \Lambda(y;\theta_l) F_i(y)\\ 
&= \sum_{i=1}^{m} \alpha_{li}(1 - \lambda(y_i;\theta_{li})) \Lambda(y;\theta_{l}) (\sum_{j=1}^{N_L}w_{ij}  \Lambda(y;\theta_{j})  \\ &  + \sum_{k=N_L+1}^{N} w_{ik}P(y;\theta_{k}))\\ 
& = \sum_{i=1}^{m}\sum_{j=1}^{N_L} \alpha_{li}w_{ij}(1 - \lambda(y_i;\theta_{li})) \Lambda(y;\theta_{l})\Lambda(y;\theta_{j})  \\& +\sum_{i=1}^{m}\sum_{k=N_L+1}^{N} \alpha_{li}w_{ik}(1 - \lambda(y_i;\theta_{li})) \Lambda(y;\theta_{l})P(y;\theta_{k}).
\ee

From Theorems \ref{thm:SILLconv}, \ref{thm:augSILLconv1} and \ref{thm:augSILLconv2} we have that Eq. (\ref{eq:augSILL_logPrime}) goes to Eq. (\ref{eq:augSILL_limApproxLog}) as $\alpha\rightarrow\infty$.
$\blacksquare$\end{proof}

\textit{Corollary}
\ref{cor:rbfApprox}:
 Under the assumptions of Theorem \ref{thm:SILLconv}, the Lie derivative of a conjunctive RBF exponentially approaches a nonlinear combination of conjunctive RBFs, specifically,
 \be 
\dot{P}(y;\theta_l)& \rightarrow  \sum_{i=1}^{m}\sum_{j=1}^{N_L} \alpha_{li}w_{ij}(1 - 2\lambda(y_i;\theta_{li})) H(y;\theta_{j},\theta_{l})
\ee exponentially as $\alpha\rightarrow \infty$.
 
\begin{proof}
The derivative of a conjunctive RBF is: 
\be
\dot{P}(y;\theta_l) = \left(\nabla_yP(y;\theta_l)\right)^T\frac{dy}{dt} = \left(\nabla_y P(y;\theta_l)\right)^T F(y) 
\ee
where the $i^{\text{th}}$ term of the gradient of $P(y;\theta_l)$ is 
\be
\left[\nabla_y P(y;\theta_l) \right]_i &=\alpha_{li}\rho(y_i;\theta_{li})(1 - 2\lambda(y_i;\theta_{li})) \frac{P(y;\theta_l)}{\rho(y_i;\theta_{li})} \\
&= \alpha_{li}(1 - 2\lambda(y_i;\theta_{li})) P(y;\theta_l).
\ee

So, the time-derivative of $P(y;\theta_l)$ is
\be\label{eq:augSILL_rbfPrime}
\dot{P}&(y;\theta_l) = \sum_{i=1}^{m} \alpha_{li}(1 - 2\lambda(y_i;\theta_{li})) P(y;\theta_l) F_i(y)\\ 
&= \sum_{i=1}^{m} \alpha_{li}(1 - 2\lambda(y_i;\theta_{li})) P(y;\theta_l)(\sum_{j=1}^{N_L}w_{ij}  \Lambda(y;\theta_{j})  \\ &  + \sum_{k=N_L+1}^{N} w_{ik}P(y;\theta_{k})) \\
& = \sum_{i=1}^{m}\sum_{j=1}^{N_L} \alpha_{li}w_{ij}(1 - 2\lambda(y_i;\theta_{li})) P(y;\theta_{l})\Lambda(y;\theta_{j})  \\ +&\sum_{i=1}^{m}\sum_{k=N_L+1}^{N} \alpha_{li}w_{ik}(1 - 2\lambda(y_i;\theta_{li})) P(y;\theta_{l})P(y;\theta_{k}).
\ee

From Theorems \ref{thm:augSILLconv1}, \ref{thm:augSILLconv2} and \ref{thm:augSILLconv4} we have that Eq. (\ref{eq:augSILL_rbfPrime}) goes to Eq. (\ref{eq:augSILL_limApproxRbf}) as $\alpha\rightarrow\infty$.
$\blacksquare$\end{proof}

 \section{PDF of X(Y-Z) and Logistic and RBFs}\label{sec:AppendixPDF}
 We can compute the PDF of X(Y-Z), a term common in both logistic and RBFs. The random variables X and Z represent the scalar parameters and the random variable Y represents the scalar measurement. So, X(Y-Z) corresponds to $\alpha(y-\mu)$. We choose X, Y and Z to be identically and independently distributed (iid) as the symmetric uniform distribution: $U(-a, a), a\in\R^+$.  We then apply the law of the unconscious statistician (LOTUS) to compute the expected value of a logistic and RBF.

We compute the PDF of X(Y-Z), where the random variables X, Y and Z are independently and identically distributed as the symmetric uniform distribution: $U(-a, a), a\in\R^+$.

We start by computing the PDF of (Y-Z).  The PDF of each random variable is \be\label{eq:uniformPDF}
f_U(x) = \begin{cases}
 \frac{1}{2a} &\mbox{if }x\in [-a, a] \\
 0 &\mbox{otherwise}.
\end{cases}\ee  Since the random variables are symmetrically distributed, this is the same distribution as (Y+Z),  a symmetric triangular distribution.  The PDF is  \be\label{eq:PDFSum}
f_T(x) = \begin{cases}
 \frac{1}{2a} + \frac{x}{4a^2} &\mbox{if } x\in[-2a, 0) \\
 \frac{1}{2a} - \frac{x}{4a^2} &\mbox{if } x\in(0, 2a] \\
 0 &\mbox{otherwise}.
\end{cases}\ee 

Now we use the formula for the distribution of a product of random variables, \be g(z) = \int_{-\infty}^{\infty}f_T(x)f_U(z/x)\frac{1}{|x|}dx, \ee to compute the PDF of X(Y-Z).  So, \be g(z)&= \int_{-\infty}^{\infty}\begin{cases}
 \frac{1}{2a} + \frac{x}{4a^2} &\mbox{if } x\in[-2a, 0) \\
 \frac{1}{2a} - \frac{x}{4a^2} &\mbox{if } x\in(0, 2a] \\
 0 &\mbox{otherwise}
\end{cases}\\&\times\begin{cases}
 \frac{1}{2a|x|} &\mbox{if }\frac{z}{x}\in [-a, a] \\
 0 &\mbox{otherwise}
\end{cases}dx\\
&= \int_{0}^{2a}
 (\frac{1}{2a} - \frac{x}{4a^2} 
)\begin{cases}
 \frac{1}{2ax} &\mbox{if }\frac{z}{x}\in [-a, a] \\
 0 &\mbox{otherwise}
\end{cases} \\
&- \int_{-2a}^{0}
 (\frac{1}{2a} + \frac{x}{4a^2} 
)\begin{cases}
 \frac{1}{2ax} &\mbox{if }\frac{z}{x}\in [-a, a] \\
 0 &\mbox{otherwise.}
\end{cases}
\ee

The following equivalences hold, however they are only equal to $g(z)$ when $\frac{|z|}{a}\leq 2a$ or, alternatively written, $z\in [-2a^2, 2a^2]$.

\be
g(z) &= \frac{1}{4a^2}\int_{\frac{|z|}{a}}^{2a}
 (\frac{1}{x} - \frac{1}{2a} 
) - \frac{1}{4a^2}\int_{-2a}^{\frac{-|z|}{a}}
 (\frac{1}{x} + \frac{1}{2a} 
)\\
&= \frac{1}{4a^2}(\ln(x)-\frac{x}{2a})|_{x=\frac{|z|}{a}}^{x=2a} \\&- \frac{1}{4a^2}(\ln(-x)+\frac{x}{2a})|_{x=-2a}^{x=\frac{-|z|}{a}}\\
&= \frac{1}{4a^2}(\ln(2a) - \ln(\frac{|z|}{a})-1+\frac{|z|}{2a^2}) \\&- \frac{1}{4a^2}(\ln(\frac{|z|}{a}) - \ln(2a)-\frac{|z|}{2a^2}+1)\\
&= \frac{1}{2a^2}(\ln(\frac{2a^2}{|z|}) + \frac{|z|}{2a^2} - 1).
\ee

So, we have that our final PDF is 
\be
g(z)= \begin{cases}
 \frac{1}{2a^2}(\ln(\frac{2a^2}{|z|}) + \frac{|z|}{2a^2} - 1) & \mbox{if } z\in I\\
 0 & \mbox{else},
\end{cases}
\ee where $I$ is the interval $[-2a^2, 2a^2].$

With this PDF we use the LOTUS to compute the expected value of a randomly sampled logistic and RBF as \be 
E[\lambda] = \int_{-\infty}^{\infty}g(x)(\frac{1}{1+e^{-x}})dx
\ee and 
\be
E[\rho] = \int_{-\infty}^{\infty}g(x)(\frac{e^{-x}}{(1+e^{-x})^2})dx.
\ee

We compute the variances of these distributions using the formula: $Var[X] = E[X^2] - E[X]^2$, which involves an additional application of the LOTUS.

To the authors' knowledge, none of these integrals has a closed form solution in terms of the sampling interval parameter, $a$. Hence, we use numeric simulation to understand this relationship (see Fig. \ref{fig:expectedVals}).

\section{Dynamic Systems for Numeric Simulations}\label{sec:AppendixSystems}
The systems we test in Section \ref{sec:numerical} are: \begin{enumerate}
    \item the Van Der Pol Oscillator: \be\label{eq:vanDerPol}
    \dot x_1 &= x_2 \\
    \dot x_2&= -x_1 + c_1(1-x_1^2)x_2,
    \ee 
    \item the Duffing Oscillator: \be\label{eq:duff} 
    \dot x_1 &= x_2 \\
    \dot x_2&= -c_2 x_2 - c_3 x_1 - c_4 x_1^3,
    \ee 
    \item the Predator-Prey System: \be\label{eq:pred}
    \dot x_1 &= c_5x_1-c_6x_1x_2 \\
    \dot x_2&= c_7 x_1x_2 - c_8 x_2,
    \ee 
    \item and the Toggle Switch: \be\label{eq:togg}
    \dot x_1 &= \frac{c_9}{1+x_2^{c_{11}}}-c_{13} x_1 \\
    \dot x_2&= \frac{c_{10}}{1 + x_1^{c_{12}}}-c_{13} x_2,
    \ee 
\end{enumerate} where $c_1=1, c_2=0, c_3=-1, c_4=1, c_5=1.1, c_6=0.5, c_7=0.1, c_8=0.2, c_9=2.5, c_{10}=1.5, c_{11}=1.4, c_{12}=1.1,$ and $c_{13}=0.25$.
\end{document}